\newtheorem{theorem}{Theorem}
\newtheorem{assumption}{Assumption}
\newtheorem{lemma}{Lemma}
\newcommand \mcP{\mathcal{P}}
\newcommand \mbR{\mathbb{R}}
\newcommand \bomega{\boldsymbol{\omega}}
\DeclareMathOperator{\E}{E}
\DeclareMathOperator{\Var}{Var}
\newcommand{\note}[1]{\textcolor{black}{#1}}
\numberwithin{equation}{section}
\title[Prior Informativeness]{Prior sample size extensions  
for assessing prior \\
\note{impact} and prior--likelihood discordance}
\author{Matthew Reimherr}\coaddress{Matthew Reimherr, Department of Statistics, The Pennsylvania State University, 411 Thomas Building, University Park, PA 16802, USA. \textit{mreimherr@psu.edu}}
\address{The Pennsylvania State University, University Park, PA, USA.}
\author{Xiao-Li Meng}
\address{Harvard University, Cambridge, MA, USA.}
\author[Reimherr, Meng, and Nicolae]{Dan Nicolae}
\address{The University of Chicago, Chicago, IL, USA.}
\begin{document}
\kern -.2in
\begin{abstract}
This paper outlines a framework for quantifying the prior's  contribution to
posterior inference in the presence of prior-likelihood discordance, a broader  concept than the usual notion of prior-likelihood conflict.
We achieve this dual purpose by  extending the classic notion of \textit{prior sample size}, $M$, in three directions: (I)  estimating $M$ beyond  conjugate families;  (II) formulating $M$ as a relative notion, i.e., as  a function of the likelihood sample size $k, M(k),$ which also leads naturally to a graphical diagnosis; and (III) permitting  negative  $M$, as a measure of prior-likelihood conflict, i.e., harmful discordance. Our  asymptotic regime  permits  the prior sample  size to grow with the likelihood data size, hence making  asymptotic arguments meaningful for investigating the impact of the prior relative to that of likelihood. It leads to a simple asymptotic formula for quantifying the impact of a proper prior that only involves computing a centrality and a spread measure of the prior and the posterior. We use simulated and real data to illustrate the potential of the proposed framework, including quantifying how weak is a ``weakly informative" prior  adopted  in  a study of  lupus nephritis. \note{Whereas we take a pragmatic perspective in assessing the impact of a prior on a given inference problem under a specific evaluative metric, we also touch upon conceptual and theoretical issues such as using improper priors and permitting priors with asymptotically non-vanishing influence.}
\end{abstract}
\keywords{Conjugate prior; prior-likelihood conflict; weakly informative prior.}

\section{Motivation and Illustration}\label{sec:moti} 

The difficulty in choosing priors and fully understanding their impact on statistical  analyses has been a primary concern of Bayesian methods since their inception. 
The common approach to alleviate such concerns is to conduct a  sensitivity analysis,  investigating how the results are affected by perturbations of the prior. However, such an approach does  not typically reveal how a chosen prior has actually contributed to the analysis in comparison to the information from the data, as captured by the  posited likelihood.  Although many have asked questions along the line  ``How much of your conclusion is actually due to your prior assumptions?",  to the best of our knowledge, there are no well-recognized approaches to quantitatively address such legitimate inquires.     

As the  adoption of Bayesian tools continues to grow, we need quantitative  assessments on the impact of priors, permitting at least a check on their impact compared to the likelihood.   
This  is desirable, both scientifically and statistically.  A posterior inference with 45\% prior information contribution may affect our decisions rather differently, at least psychologically, from one with only 5\% prior contribution. %
However, quantifying the impact of a prior has been a very challenging task, partially explaining the lack of routinely adopted methods. 
  A key difficulty is that the information from the prior may  be in conflict with that from the likelihood to a point that  it can actually ``subtract" rather than ``add" to an analysis.  Recently \citet{efron:2015} explored frequentist properties of Bayesian estimates, illustrating that, in many ways, researchers are often still unable to understand/quantify the impact of their priors on their inference.  Our paper aims  to make a substantive contribution in this direction, though we by no means declare that we have found \textit{the} solution as this is an area that requires much more research.
  
\subsection{Building Upon the Classic Notion of  Prior Sample Size}\label{sec:size}
This paper presents a strategy for simultaneously assessing the degree of conflict between the prior and likelihood, and quantifying the information contribution of the prior to the posterior. 
We accomplish this by extending the easily interpretable information metric \textit{prior sample size} (PSS), a common notion in the literature of conjugate priors \citep[e.g.,][]{diaconis1979conjugate,gutierrez1997exponential,meng2002single}.   As is well known, a conjugate  prior can be equated to the posterior from a prior study with, say, $M$ i.i.d. (hypothetical) observations and a 
baseline  ``non-informative" prior.   This equivalence provides a  concrete practical guideline. If our likelihood is based on $N$ i.i.d.
observations from the same conjugate family, then we  can consider that  the conjugate prior has contributed $M/(N+M)\times 100\%$ of the information to our posterior inference.

Given its practical appeal,  multiple  efforts have been made to extend the concept of PSS beyond conjugate families.  The approach  by \cite{clarke1996implications}
is particularly significant, taking advantage of reference priors \citep{bernardo1979reference}, which are equivalent to  Jeffreys priors in univariate cases \citep[][]{george1993obtaining}.  Specifically, by minimizing their relative entropy (i.e., Kullback-Leibler divergence), Clarke's approach matches a target prior---typically considered to be  informative---with the posterior based on a likelihood from a given family and a given reference prior.   The resulting likelihood function is then interpreted as representing implicit data information in the target prior, relative to the reference prior. The PSS of the target  prior is then approximated by this  likelihood data size, though \cite{clarke1996implications} also identifies the actual data values used by the likelihood (typically not unique).   Subsequently,  \cite{clarke:yuan:2006} developed closed-form expressions for the prior sample size, and \cite{lin:pittman:clarke:2007} examined how to quantify the information content with non i.i.d. data.  \cite{ginebra:2007} discussed, more generally, how to quantify the information content in an experiment, and \cite{berger:bayarri:pericchi:2014} investigated how to quantify effective sample sizes of various linear models.  Recently \cite{wiesenfarth2019quantification} explored the usage of historical data for quantifying PSS in the context of clinical trials.

In addition, \cite{morita:thall:muller:2008} took a similar approach but with a different baseline and divergence. Their  baseline prior was constructed by keeping the prior mean and correlations (for multivariate cases) the same as the target prior, but with the prior variance greatly inflated to render $\epsilon$-amount of information.  Their  divergence is based on the trace of the (expected) curvature of the log density, which they reported was best after extensive trial and error. This is intuitive as the curvature of the log density is to the prior variance what Fisher information is to the posterior variance.  This pairing  between the  divergence  and the baseline prior  was also emphasized by \cite{clarke1996implications},  because the reference prior is  the minimizer of an expected K-L divergence, and hence his pairing also guarantees his  PSS to be  non-negative when it exists.  However,  permitting a negative PSS turns out to be the key to resolve the thorny issue of prior-likelihood conflict when measuring the impact of prior information, as we will discuss in Section~2.  The methods of  \cite{morita:thall:muller:2008} were further illustrated on biomedical applications in  \cite{morita:thall:muller:2010}, and extended to hierarchical models in \cite{morita:thall:muller:2012}.

A commonality  between the settings in \cite{clarke1996implications} and \cite{morita:thall:muller:2008} (and their subsequent extensions) is that both treat the likelihood model as a device 
for measuring  the information in the target prior,  with the hypothetical observations  optimized over  or averaged over, respectively. Hence these are \textit{pre-data} measures, most useful for design purposes and theoretical investigations. In this paper, however, we address  a harder and more common \textit{post-data} question:  how many observations are required, approximately, to match the prior's contribution---in terms of some statistical efficiency---to  a posterior inference based on a particular \textit{likelihood function} from a set of observed data?
That is, we intend our PSS to be \textit{inherently  data dependent}. 

In theory, we all hope that, at the very least, our prior does no harm. However,  in practice, typically there are some degrees of prior-likelihood conflict. This can lead to, for example, a $95\%$ posterior interval that is wider than an analogous  $95\%$ confidence interval, or which has deficient coverage.  This can occur regardless of the correctness of the prior or likelihood, because a particular data set from a known model
can still exhibit ``tail" behavior.  Any measure of the prior contribution to a particular posterior inference must  then allow for the possibility of negative prior contribution. But  how should we formulate negative ``sample size"? We report  a practical way to circumvent this problem by matching two posteriors corresponding to two priors (target and baseline), instead of the prior-posterior matching as in \cite{clarke1996implications} or  in    \cite{morita:thall:muller:2008}.  

The more fundamental  question is the meaning of measuring  the statistical efficiency in a particular study. All statistical inference paradigms  require a specification of \textit{reference replications} \citep[see, e.g.,][]{liumeng2016}, because otherwise there is no variation, and hence no information, to speak of.  The reference replications in \cite{clarke1996implications} and \cite{morita:thall:muller:2008} are pre-data hypothetical observations as respectively specified, designed as frequentist measures, averaging over many hypothetic data sets that we will never observe.
But if we insist on using  the Bayesian replication, that is, all data sets that are exactly the same as the observed ones, then our entire information measure will be driven by the prior, the only source of variation.  To avoid either extreme \citep[see][for reasons for this avoidance]{liumeng2016}, we adopt a compromise: we measure information/efficiency with respect to all data sets that are \textit{exchangeable} with the observed data, that is, they are not identical  to the observed data but they share the same generating model with the same parameter value for the latter. We do not know this parameter value, but when we have adequate internal replications (e.g., with   i.i.d. data), we will be able to estimate our measure via common methods such as bootstrap.  

Before we proceed to illustrate the key ingredients of our proposal, we emphasize  that the need for choosing a baseline prior, as above  and in other similar works such as \cite{evans:jang:2011},  is  unavoidable because it is mathematically impossible  to represent  ignorance via a probability distribution \cite[see][Proposition 2.1]{martin2016inferential}. In responding to an insightful question raised by a reviewer, we will also report a ``prior size paradox" caused by this impossibility (see Section~\ref{s:para}).   Our preference therefore is to choose a prior that represents what a practitioner would adopt without real prior information, such as those documented in  \cite{kass:wasserman:1996}.    We also note that various developments on deviance information criteria  such as \cite{spiegelhalter:etal:2002}, \cite{watanabe:2010} and \cite{watanabe:2013}, and others as reviewed in \cite{gelman:etal:2014}, are similar in spirit to our goal of quantifying prior-likelihood conflicts. They  also  provide  information deviances to be examined  within our framework,  in addition to the quadratic loss measures, a focus of this paper.  Furthermore,  the concept of \textit{surprise} has been used in \cite{evans:1997}, \cite{evans:moshonov:2006}, \cite{bousquet:2008}, and \cite{evans:jang:2011} for a variety of procedures including detecting prior-likelihood conflicts. In particular,  \cite{evans:moshonov:2006} provided methods to check for conflict in proper priors, an assumption we avoid making due to the heavy reliance on improper priors in practice, whereas \cite{bousquet:2008} expanded on this work, giving a binary decision rule for determining if there is conflict.
\vskip -0.1in
\subsection{The Normal Enlightenment}\label{sec:norm}
As usual,  a normal distribution example sheds much light on what lies ahead, and permits exact analytic results.  Our main theoretical results, given in Section~2,  show that under an asymptotic regime that  permits the influence of the prior to grow with the likelihood data size, the exact normal results  are special cases of the asymptotic results for a large class of likelihood-prior models. We will also use this example (and others) in Section~4 to check the implementation and computational procedures  outlined in Section 3.

We begin by assuming $\vec X_n=\{X_1,\dots,X_n\}$ to be an i.i.d. sample from $N(\mu, \sigma^2_0)$, where for simplicity of illustration, we assume $\sigma^2_0$ is known. We adopt the usual conjugate prior on $\mu$, $\pi = N(\mu_{\pi}, \sigma^2_0/m)$,  but we parameterize the prior variance as $\sigma^2_0/m$ since if our prior was set according to a previous data set also from the model $N(\mu, \sigma^2_0)$,
say $\{Y_1, \ldots, Y_m\}$,  then the variance of the previous MLE of $\mu$ would be $\sigma^2_0/m$.  For the baseline prior, $\pi_b$, we take $m=0$, i.e., a constant prior.  With a slight abuse of notation, we 
let $\vec X_k$ be a generic notation for any  subset of $\vec X_n$ with size $k \le n$, and $\bar X_k$ denotes the corresponding sample average.   The posterior of $\mu$ given   $\vec X_k$ under either prior is normal with, respectively, 
   \begin{align}
& \E_\pi[\mu |\vec{X}_k] = \frac{m \mu_\pi+ {k \bar{X}_k}}{m + k},
&& \Var_\pi [\mu |\vec{X}_k] =  \frac{\sigma^2_0}{m + k}; \label{e:norm}\\
& \E_{\pi_b}[\mu |\vec{X}_k] =\bar{X}_k,
&&\Var_{\pi_b} [\mu | \vec{X}_k] =   \frac{\sigma_0^2}{k}. \label{e:norb}
\end{align}

It is natural to ask, how $\pi$ has changed our posterior inference for $\mu$ compared to the baseline?  To be specific, let us examine the posterior mean-squared error  (MSE), averaged over all data sets that are generated (under the normal model) by the same $\mu_0$ that generated our observed data $\vec X_n$.  Therefore,  denoting  by $U$ this  expected MSE,  we would like to compare
\[
U_\pi(k) =  \frac{\sigma^2_0}{m + k} + \E \left[  \frac{ m \mu_\pi+ {k \bar{X}_k}}{m+ k} - \mu_0 \right]^2
\quad \mbox{ with } \quad
U_{\pi_b}(k) = \frac{\sigma^2_0}{k}  +  \E \left[   \bar{X}_k - \mu_0 \right]^2.
\]  
Let $\Delta =\sqrt{m}(\mu_0 - \mu_\pi)/\sigma_0,$  which encapsulates the degree of \textit{discordance} between the prior and the likelihood. It is then straight forward to show that
\begin{equation}\label{eq:norm}
U_\pi(k) = \frac{[2k + m (1+\Delta^2)]\sigma^2_0}{(m + k)^2}   
\quad \mbox{ and } \quad
U_{\pi_b}(k) = \frac{2\sigma^2_0}{ k}.
\end{equation}
Our approach is to  find $M(k)$ such that $U_{\pi_b}(k+M(k)) = U_\pi(k)$,
so $M(k)$ can be viewed as the PSS of $\pi$ relative to $\pi_b$. 

After some algebra, we can express  
\begin{equation}\label{eq:normM}
M(k) = m \left[1-   \frac{(r+1)(\Delta^2 - 1)}{  2(r+1) +r(\Delta^2 - 1)} \right], 
\end{equation}
where $r=m/k$ is the nominal prior size relative to the likelihood data size. Expression (\ref{eq:normM}) reveals something unexpected: $M(k)=m$, the \textit{perceived} PSS, if and only if $\Delta^2=1$.  This may surprise those who expect that $M(k)=m$ when $\Delta=0$. However, if our prior was specified according to a prior data set $\{Y_1, \ldots, Y_m\}$, then we would have set $\mu_\pi=\bar Y_m$, and hence $\Delta^2=m(\bar Y_m-\mu_0)^2/\sigma^2_0$, which is distributed as $\chi^2_1$ when the prior data set is indeed from the same population. That is, on average we should expect $\Delta^2$ to be 1, not 0. Therefore, $\Delta^2<1$ means we have a ``fortuitous" prior (as compared to a no-conflict prior), and hence $M(k)>m$ because of the additional ``lucky" information brought in by $\mu_\pi$. When $1<\Delta^2 < 1+ 2(1+r)$, we have $0<M(k)<m$, meaning that, although the prior is not as informative as its nominal size $m$ advertises, it is still helpful in the sense of reducing the MSE over using the baseline.  However, when $\Delta^2 \ge 1+ 2(1+r)$, \note{the prior has zero or negative impact},  because $M(k)\le 0$. 1

In summary, (\ref{eq:normM}) tells us that, \note{\textit{with respect to the impact on MSE for estimating }}$\mu$, 
\begin{equation}\label{e:key}
\begin{array}{lcr}
\mbox{When}\ \Delta^2<1, &  M(k) > m;  &  \mbox{lucky prior} \\
\mbox{When}\ \Delta^2=1, &  M(k) = m; &  \mbox{``advertised'' prior} \\ \mbox{When}\ 1<\Delta^2 < 1+ 2(1+r), & 0< M(k) < m; &  \mbox{unlucky but helpful prior} \\
\mbox{When}\ \Delta^2= 1+ 2(1+r), & M(k)=0; &  \mbox{\note{zero-impact} prior} \\
\mbox{When}\ \Delta^2 > 1+ 2(1+r), &  -k \le M(k) <0;  &  \mbox{harmful prior} \\
\end{array} 
\end{equation}

This calls for a more general concept of \textit{prior-likelihood discordance} than prior-likelihood conflict to describe the lack of harmony between our \textit{likelihood model} and \textit{prior model}, because not all such discordance is harmful, which the phrase ``conflict" would suggest.  Indeed, $\Delta^2=0$ represents our most fortunate case, with the prior mean being exactly the true parameter value and hence $M(k)$ reaches its maximal  value $m(3+2r)/(2+r)$. We observe that this maximal value is always between $1.5 m$ and $2m$. We suggest that the term \textit{prior-likelihood conflict} is reserved for cases when our prior becomes harmful, that is, when $M(k)<0$. \note{We must emphasize that we take a pragmatic perspective in suggesting these terms, by considering primarily the impact of the target prior on the chosen inference with respect to a specified evaluative metric (and a baseline prior). Hence a zero-impact prior does not mean a zero-information prior (which is a self-contradictory phrase in the Bayesian framework), nor does a helpful prior imply no harmful consequences, such as lack of robustness; see \citet{al2017optimal}. }

Regardless of the value of $\Delta^2$, we see from (\ref{eq:normM})
that $M(k)$ is a strictly decreasing function of $k$, unless $\Delta^2=1$
when it is a constant function. Its decreasing rate is controlled by $\Delta^2-1$, with the most rapid decreasing occurring when $\Delta^2=\infty$, in which case $M(k)=-k$. We see that,  whenever there  is a prior-likelihood discordance (regardless of being lucky or unlucky), the slope of $M(k)$ will be negative.
Its extreme value, $-1$, is also achieved if and only if $\Delta=\infty$, which means that the prior-likelihood conflict is so extreme that it wipes out the entire likelihood. When we treat $k$ as a continuous index, the derivative of $M(k)$ will be always bounded below by $-1$. 

Therefore, this  normal example leads to (at least) five observations: 
\begin{description}
\item(I) PSS is a relative concept, relative to the size of the likelihood sample size (LSS);
\item(II) The dependence of PSS on LSS is governed by the prior-likelihood discordance;
\item(III) The prior-likelihood discordance can be both beneficial and harmful;  
\item(IV) PSS can take negative values, when the prior-likelihood discordance is severe;
\item(V) The PSS as a function of LSS, $M(k), $ has a slope that is bounded below by $-1$. 
\end{description}
The main contribution of this paper is to show, theoretically and empirically, that these observations hold rather generally. Theoretically, we show that the normal formula (\ref{eq:normM}), not surprisingly, holds asymptotically for a rather general class of distributions and hence (\ref{e:key}) holds as well. This asymptotic approximation provides a quick (and not too dirty) assessment of the prior impact almost as a byproduct of the original posterior computation. But for those who are willing and able to do more, we also describe a finite-sample bootstrap-like method to estimate $M(k)$, especially its slope, as a function of $k$,  which provides a diagnostic tool for detecting the discordance.  

A reviewer's comment also reminded us to stress  that the  notion of \textit{prior-likelihood discordance} is a qualitative and absolute concept,  intended to indicate any kind of incompatibility   between the prior and likelihood (function), harmful or not.  In contrast,  the classic notion of \textit{prior sample size} (PSS) is a quantitative and relative concept, designed to provide a practically appealing measure to numerically index the strength or weakness of an adopted prior with respect to our likelihood function. Whereas both concepts are needed, quantitative measures can do more harm than qualitative ones because of their seductive nature of being precise, regardless of their validity. It is therefore critical for those of us who develop such measures to be explicit about their limitations and potential misuse, a practice we follow whenever appropriate.    

The remainder of the paper is organized as follows. Section \ref{s:meth} provides our general framework, and implements it asymptotically, as well as for finite-sample i.i.d. data.  Section \ref{s:th} establishes theoretical results for a large  class of distributions to justify the implementations in Section~\ref{s:meth}.  Section \ref{s:ex} gives a simulation study, and a real-data application.  Section \ref{s:disc} concludes with a discussion of complications, limitation and   open problems. Some secondary proofs and  technical verifications are in the  online supplemental material.  All computations were done using {\tt R} and the accompanying code can be found at the corresponding authors website.

\section{A General Formulation of Prior Sample Size} \label{s:meth}
Let $Y\in S$ represent a data set and $f(y|\theta)$ its density, with $\theta \in \Theta$ being the model parameter, and $\theta_0$  the  value that generated $Y$.
Let $I=I(\theta)$  be a user defined \textit{scalar} indicator  of information content in $Y$ about $\theta$, i.e.,\ $I$ is a non-negative
real number determined by  $f(y|\theta)$.  We can index  $Y$  by $I, $ and use the notation $Y_I$ and $y_I$ as needed. When $Y$ consists of $n$ i.i.d.  observations, we typically set $I=n$.  

Let $\mcP$ be the  set of all distributions over $\Theta$, and $D: \mcP\rightarrow [0,\infty)$
be a user defined measure quantifying the amount of uncertainty in a particular distribution or a loss function when one invokes a decision-theoretic perspective.  For example, when $\theta$ is univariate, $D$ can be the variance, the mean absolute deviation, the mean squared error to a specified
value of the parameter, etc.  The range of $D$ implies that it exists and is finite, a condition which may require us to restrict its domain to a subset of $\mcP$.
 We emphasize that our approach only requires $D$ be real valued, not that $\theta$ be univariate.
 
In general, the choice of $D$ should reflect aspects of a posterior that are most relevant to what we want to learn. A common choice is the posterior MSE  as in Section~\ref{sec:norm}:  
\begin{equation}
D_{\theta_0}(\pi(\cdot|Y)) = \Var_\pi(\theta |Y) + [\E_\pi(\theta |Y) - \theta_0]^2, \label{e:dtheo}
\end{equation}
where, for notation simplicity, we assume $\theta$ is univariate, and we use the subscript $\theta_0$ to highlight the dependence of $D$ on the true value. 
Recall that expected measures such as MSE are typically not invariant even to one-to-one transformations, regardless of whether they are for estimation uncertainty or prediction error.  \note{We stress that as a pragmatic measure to capture the impact of a prior on the actual values of these expected measures, the proposed PSS can vary with the scale of what we want to estimate or predict; see Section~\ref{s:whole}.}
We can also consider other $L_p$ measures, such as $L_1$ distance (see Section \ref{a:app} of the appendix). We discuss several possible directions in Section \ref{s:future}  on choices for $D$ as future work.
For additional ideas on choosing $D,$ see \cite{morita:thall:muller:2008} for a measure based on curvature of the log likelihood and \cite{gelman:etal:2014} for measures based on deviances. One can also use multiple $D$s to serve for different purposes.

\subsection{Define the Prior Information Function}
For a given $D$, the \textit{expected  posterior  loss (i.e., risk) with respect to the true model}  is$$
U_{\pi}(I)=\int_{S} D_{\theta_0}\left[\pi(\cdot|y_I)\right] f(y_{I}|\theta_0) \ d\mu(y_I).
$$ 
Given a baseline prior $\pi_b$, we define
$M(I)$ as the amount of information needed to match the risk  in $\pi(\theta|y_I)$ to that in $\pi_b(\theta|y_{I + M(I)})$, that is, we seek $M$ such that 
\begin{align} \label{e:Us}
U_{\pi_b}\left(I+M(I)\right)=U_{\pi}(I),
\end{align}
just as  in the normal example, where $I=k$. 
  
To see how $M(I)$, as a function of $I$, is useful for  detecting prior-likelihood discordance,   let us assume it is differentiable with respect to  $I,$ which, as an index for information, can be treated as continuous. Assuming differentiability as needed, and taking the derivative with respect to $I$ in \eqref{e:Us}, $$
U'_{\pi_b}\left(I+M(I)\right)\left[1+M'(I)\right]=U'_{\pi}(I),
$$
we arrive at,  assuming $U'_{\pi_b}\left(I+M(I)\right)\not=0$,
\begin{equation}\label{e:de}
1+M'(I)=\frac{U'_{\pi}(I)}{U'_{\pi_b}\left(I+M(I)\right)}.
\end{equation}
When $D$ is chosen appropriately, $U(I)$ should be a strictly decreasing function of $I$, since an appropriate uncertainty measure should decrease as the information $I$ increases \textit{in expectation} (see the on-line supplement for why we need to emphasize this issue, as well as \cite{meng2014got} on how variance measures violate this  monotonicity for inefficient procedures).  
This implies that the right hand side of (\ref{e:de}) will be non-negative,  yielding $M'(I)\ge -1$, confirming observation (V) from the normal example.
Moreover, a negative $M'(I)$ implies that the uncertainty decreases slower when using $\pi$ because the left hand side of (\ref{e:de}) is less than one, indicating a discordance between the likelihood function  $L(\theta|Y)\propto f(Y|\theta)$ and the prior $\pi$. The  $-1$ lower bound has a practical interpretation: the most extreme prior-likelihood conflict \textit{detectable} by $M(I)$ is when the negative information in the prior erases every single piece of information (defined by the information in a single data point) added to the likelihood.   

On the other hand, when there is no detectable discordance, e.g., when the prior $\pi$ comes from a (exchangeable) previous experiment on the same $\theta_0$, the information in the prior should stay about the same regardless of the information in the likelihood function. Hence $M'(I)$ will be approximately zero. This interpretation is most obvious when we notice that  $\lim_{I\rightarrow \infty}M(I)/I=\lim_{I\rightarrow\infty}M'(I)$ by L'H\^opital's rule if $M(I)\rightarrow\infty$, and that $I+M(I)=I[1+R(I)]$, where $R(I)=M(I)/I$. Hence $M'(I)$, for large $I$, approximates the direct measure  $R(I)$, \textit{the information gained or lost due to the prior relative to that in the likelihood.} Therefore, when the information in the likelihood grows but the prior information stays about the same, $M'(I)\approx R(I)\approx0$ for large $I$, i.e.,  the prior information is negligible asymptotically.   

In contrast, if say $M'(I)$ or $R(I)\approx - 0.5$, then  the prior-likelihood conflict has caused a reduction of  50\% information, e.g.,  our posterior mean/mode based on 1000 i.i.d. observations and our prior $\pi$ behaves like the posterior mean/mode based on 500 i.i.d. observations and  the baseline prior $\pi_b$,  which typically behaves like  the MLE based on 500 i.i.d. observations.  Clearly it is helpful  for  users of Bayesian methods  to be aware of such loss of efficiency,  just as they should be aware of the uncertainty in their  estimators.    
 
An appealing property of using $M(I)$ to measure $I_{prior}$, the prior information, is that  if we view $I+M(I)$ as the information measure for the posterior, $I_{posterior}$,  then trivially 
 \begin{equation}\label{e:id}
I_{\rm posterior}=I_{\rm likelihood}+I_{\rm prior}
\end{equation}
because the information in the likelihood, $I_{\rm likelihood}$, is $I$ in our setup. 
Whereas  $\eqref{e:id}$ is practically appealing, it is a non-standard information decomposition because $I_{\rm prior}=M $ can be negative, pointing to a prior-likelihood conflict.  

\subsection{Implementing the Asymptotic Formula}
As we will demonstrate in Section~\ref{s:th}, the normal formula (\ref{eq:normM}) holds asymptotically for a general class of distributions. Our theoretical and empirical investigations provide us sufficient confidence to suggest that, \textit{in the absence of other more reliable methods}, it can be adopted to be a rule-of-thumb for a quick assessment of the impact of the prior. Specifically, formula (\ref{eq:normM}) implies that, when $k=n$ (which is our target case),   
\begin{equation}\label{eq:key}
R(n)=\frac{M(n)}{n} = r\left[1-\left(\frac{2}{\Delta^2-1}+\frac{r}{1+r}\right)^{-1} \right].
\end{equation}
Therefore, to compute $R(n)$ we only need to compute $\Delta^2$ and $r$. Here $r$ and $\Delta$ can take on a number of asymptotically equivalent forms and hence they can be estimated in a number of different ways. 

For computational simplicity, in general, we recommend using the estimates 
\begin{equation}\label{eq:pd} \hat r=\frac{1}{d}
\ \rm{trace}\left(\hat\Sigma_{\pi_b}{\Sigma^{-1}_\pi}\right), \quad \text{and} \quad
\hat\Delta^2 = (\hat\mu_{\pi_b}-\mu_\pi)^\top{\Sigma_\pi}^{-1}(\hat\mu_{\pi_b}-\mu_\pi);
\end{equation}
where $d$ is the dimension for multivariate $\theta$. Note here we have deviated from our assumption of $d=1$ in order to provide explicit general formula, which might not be immediate for general practitioners if we only give the univariate version $\hat{r}=\hat\sigma^2_{\pi_b} /\sigma^2_{\pi}$ and $\hat\Delta^2=(\hat\mu_{\pi_b}-\mu_\pi)^2/\sigma^2_{\pi}.$
Here $\mu_\pi$ and $\Sigma_\pi$ (or $\sigma_\pi^2$) are respectively the \textit{prior} mean and variance of $\theta$ from the target prior $\pi$, and $\hat\mu_{\pi_b}$ and $\hat\Sigma_{\pi_b}$ (or $\hat\sigma_{\pi_b}^2$) the \textit{posterior} mean and variance of $\theta$ under the baseline prior $\pi_b$ (and use all the data). These four quantities are readily available for the vast majority of Bayesian analyses where a proper prior is used; note the need of assessing the prior impact relative to a baseline prior (often improper) arise typically only when the prior is proper. Just as a sanity check, for the example in Section~\ref{sec:norm}, $\sigma_\pi^2=\sigma^2_0/m$, $\hat\sigma^2_{\pi_b}=\sigma^2_0/n$, and hence $\hat r=m/n=r$ (when $k=n$). Furthermore, $\hat\mu_{\pi_b}=\bar X_n$, hence $\hat\Delta^2=m(\bar X_n-\mu_\pi)^2/\sigma_0^2$, which consistently estimates $\Delta^2=m(\mu_0-\mu_\pi)^2/\sigma_0^2$ (recall here $\sigma_0$ is a known constant). 

There are cases, however, where a proper prior does not have variance or even mean, such as the Cauchy prior in Section~\ref{s:app}. Our theory actually does not require them to exist, but rather the existence of a prior estimate of $\theta$ and the associated uncertainty measure, denoted by $\tilde\mu_\pi$ and $\tilde\Sigma_\pi$ (or $\tilde\sigma_\pi^2$)respectively. For the Cauchy prior, for example, we can use its median for $\tilde\mu_\pi$ and its scale parameter for $\tilde\sigma_\pi$. 

In those cases where the prior estimate $\mu_\pi$ and its associate uncertainty $\Sigma_\pi$ are not readily available, one can use the same routine for computing the posterior mean and variance to approximate them by applying the routine to a random selected subsample of size $n_0$. Ideally we want to set $n_0=0$,
but if that is not permissible (e.g., resulting in a nonconvergent MCMC), we can use $n_0$ the smallest possible one that still rends a well defined output from the posterior routine. That is, we are willing to move a very small part of the likelihood into the prior in order to gain computational simplification, 
and then assessing the contribution of this  enhanced prior, as an approximation to the actual prior contribution. This might cause a slightly over-estimation or under-estimation of our prior contribution, but as long as $n_0$ is a few percentages of the total $n$, the resulting estimate $\hat r_\Delta$ should serve the same purpose as that from the actual $R(n)$. Again, our practical interest is to gain a reasonably quantified feeling of the impact of the prior (e.g., whether it is $5\%$ or over $30\%$), not to pinpoint the exact prior contribution, which will not be a fruitful pursuit even if it is theoretically possible. 

\subsection{A Finite-Sample Procedure for i.i.d. Data}\label{s:diag}
Even as a quick-and-not-so-dirty assessment metric, the accuracy of (\ref{eq:key}) will depend on how soon the asymptotic kicks in.
For data consist of $X_1,\dots, X_n \overset{i.i.d}{\sim} f(x|\theta)$,  with $\pi(\theta)$ being the target prior, we can also implement $M(r)$ empirically and numerically,
as long as we are willing to perform some non-trivial computation.  Specifically,
 \begin{enumerate}
\item Choose a baseline prior $\pi_b(\theta),$ such as an objective or reference prior; see \cite{kass:wasserman:1996} and \cite{berger:bernardo:sun:2009}. A flexibility of our strategy is the allowance of atypical baselines \citep[e.g.,][]{protassov:etal:2002}.
 \item Choose $D_{\theta_0}(\cdot)$ and then construct an estimator of
 \[
 U_{\pi, \theta_0}(k) =  \E[ D_{\theta_0}(\pi(\cdot |{X}_1,\ldots, X_k)) | \theta = \theta_0], \quad k=1,\ldots, K,
 \]
where we choose $K=O(n^{1/2})$ for reasons given in Section~\ref{s:th}.  Letting  $\bomega_k$ be the $\binom{n}{k} \times k$ matrix enumerating all possible $\binom{n}{k}$ subsamples of $\{1,\dots,n\}$ of size $k$, we can then estimate $U_{\pi, \theta_0}(k)$ by  $\hat U_{\pi, \hat\theta_n}(k)$, where $\hat\theta_n$ is an efficient estimator of $\theta_0$ based on all data $\{X_1, \ldots, X_n\}$, and   
\begin{equation}\label{e:dsamp}
 \hat U_{\pi, \hat\theta_n}(k) = \frac{1}{\binom{n}{ k}} \sum_{j=1}^{\binom{n}{k}}D_{\hat\theta_n}[\pi(\cdot | X_{\bomega_k(j,1)}, \dots, X_{\bomega_k(j,k)})].
\end{equation}
  In practice,  a sub-sampling strategy, that is,  bootstrapping, will typically suffice. Obtain  $\hat U_{\pi_b, \hat\theta_n}(k)$ analogously to $\hat U_{\pi, \hat\theta_n}(k)$, with the baseline $\pi_b$ in place of $\pi$.
\item  Interpolate the $\hat U$ functions so they live on the real line.  We use linear interpolation for simplicity, but one can investigate  more sophisticated methods. We then define
$$\hat M(k) =\arg \min\{m \in \mbR:   \hat U_{\pi, \hat\theta_n}(k)  = \hat U_{\pi_b, \hat\theta_n}(m+k)\}.$$
  For $\hat{M}(k)$ to exist, we need to avoid (at least) $\hat U_{\pi, \hat\theta_n}(0)<  \hat U_{\pi_b, \hat\theta_n}(K)$, i.e., the  information in $\pi$ is so strong that it exceeds the combined  information from  the entire likelihood with  all $K$  observations  and from the baseline prior. Whereas we can try $k$ (and hence $K$) as large as $n$, the very need to do so should serve as a warning that the  prior is very informative. Indeed, if the solution still does not exist when  $k=n$,  then it suggests that at least 50\% of our posterior information will come from  our  prior  $\pi$.  

\item Plot the sequence $\hat M(k)$ and $\hat R(k)=\hat M(k)/k$ against $k$, for $k=1,\ldots, K$, and regress $\hat M(k)$ on $k$ for $k=k_0, \ldots, K$ for some suitably chosen $k_0$ to estimate an \textit{approximate limiting slope} of $\hat M(k)$ as a function of $k$, denoted by $S_K$.   Based on our current theoretical and empirical evidence, we observe the following: 
\begin{itemize}
\item  When there is no noticeable prior-likelihood discordance,  $\hat M(k)$ stays fairly constant, and hence $S_K\approx 0$, and $\hat R(k)$ will approach zero rapidly  as $k$ increases;
\item Any serious departure of $\hat M(k)$ from  being a constant function, especially as  a monotone decreasing function,   indicates  a prior-likelihood discordance;
\item Both $R(k)$ and $S_K$ serve as measures of the degree of discordance, where $R(k)$ measures the loss (or gain) due to the prior-likelihood discordance at a finite $k\le K$, and $S_K$ serves an estimator of $R(n)$, the object of our central interest,  for  $n >> K$; \item Very serious prior-likelihood conflict will cause  $\hat R(k)$ or $S_K$ to approach $-1$, i.e.,  the conflict would essentially wipe out all the information in the likelihood.     
\end{itemize}
\end{enumerate}
 
We use $S_K$ instead of $\hat R(K)$ to estimate $R(n)$ because $K$ needs to be chosen such that $n/K=O(n^{1/2})\rightarrow \infty$ and hence $\hat R(K)$ is often too far from $R(n)$.  However, as long as we are able to choose $k_0$ such that $\hat M(k)$, for $k\ge k_0$, is reasonably linear in $k$, we can approximate  $R(n)$  by the slope from  regressing $\hat M(k)$ on $k$ for $k\ge k_0$.  The theoretical and empirical evidence provided below indicates that this approximation is of practical value.
Nevertheless, we do not have any evidence, nor intuition, to  suggest that it cannot be improved; we hence invite readers to search for improvements.    
  
\section{ Theoretical Underpinning}\label{s:th}
This section establishes an asymptotic result to provide  some  theoretical insight about the  procedure given in Section \ref{s:diag}, with the $D$ being the estimated posterior MSE  given by, for $k=1,  \ldots, K (\ll n)$,
\begin{equation}
D_{\hat\theta_n}(\pi(\cdot|\vec{X}_k)) = \Var_\pi(\theta |\vec{X}_k) + [\E_\pi(\theta |\vec{X}_k) - \hat\theta_n]^2, \quad {\rm with} \quad \hat\theta_n=\E_{\pi_b}[\theta|\vec{X_n}]. \label{e:etheo}
\end{equation}
Any suitable asymptotic regime  here must  permit  the prior influence to grow in some suitable way with the likelihood data size. Otherwise the prior contribution would become negligible by design, as with the standard asymptotic framework for  the large-sample  equivalence  between Bayesian and likelihood inferences. We emphasize that the standard asymptotic framework is statistical, meaning that its limiting process is a statistically feasible one, at least conceptually. The non-standard asymptotics strategy we adopt is mathematical, invoked purely for obtaining a tractable mathematical expression to approximate a target quantity. This is in the same spirit as the popular large-$p$-small-$n$ asymptotics, where the number of parameter $p$ is assumed to grow with the sample size $n$, a process typically with no scientific or statistical reality, because nature and humans do not collaborate with each other in choosing the number of variables ($p$) relative to the sample size ($n$). 
See \cite{LiMeng:2020} for a discussion about the importance of distinguishing between \textit{mathematical asymptotics} and \textit{statistical asymptotics}.

Our non-standard regime shows that the key identity for the normal case, (\ref{eq:normM}), holds asymptotically, essentially for  all posterior-prior  families that  satisfy the following ``functional shrinkage" assumption.   For simplicity, we  restrict $\theta$ to be univariate, but the results hold generally with necessary extensions of notation, as illustrated by (\ref{eq:pd}).        

\begin{assumption} \label{a1}  Assume $X_1, \dots, X_n \overset{i.i.d.}{\sim} f(x|\theta)$ with respect to a  measure on $\mbR$, where $\theta\in\mbR$.     Assume that the prior, $\pi(\theta)$, is such that there exists $m>0$ and $\mu_{m}\in \mbR$ such that for \textit{any} $\vec X_k=\{X_1,\ldots,X_k\}$, where $k \ge k^*$ for some fixed $k^*$,  the following hold     
\begin{equation}\label{e:conj}
\E_\pi[\theta|\vec X_k] = u(T_{k,m})+O_p((m+k)^{-1})  \quad \mbox{and} \quad \Var_\pi[\theta|\vec X_k] = \frac{v(T_{k,m})}{m+k}+O_p((m+k)^{-2}),
\end{equation}
where $u$ is a twice differentiable and $v>0$ is a differentiable,  
\begin{equation}\label{e:tkm}
T_{k,m}=\frac{m \mu_m + k \bar{T_{k}}}{m+k},
\end{equation}
and $\bar T_{k}$ is the average of some $T_i = T(X_i)$ over $i=1,\ldots,k$, whose mean $\mu_T=\E[T_i|\theta]$ and variance $\sigma_T^2=\Var[T_i|\theta]$ are assumed to exist.  Furthermore, assume that our baseline prior $\pi_b$ corresponds to the limiting case of $\pi$ when $m$ is set to zero.  That is, 
\begin{equation}\label{e:conjb}
\E_{\pi_b}[\theta | \vec X_k] = u(\bar T_{k})+O_p(k^{-1})  \quad \mbox{and} \quad \Var_{\pi_b}[\theta|\vec X_k] = \frac{v(\bar T_{k})}{k}+O_p(k^{-2}).
\end{equation}
\end{assumption}

 Assumption~\ref{a1} is satisfied by many common conjugate prior distributions  including the six natural exponential families (NEFs) with quadratic variance functions \citep{morris:1982}; see the online Supplement. More broadly,  under standard regularity conditions,  a log-likelihood function resulting from i.i.d. data is known to be asymptotically  quadratic, and hence we can expect 
Assumption~\ref{a1} to hold at least asymptotically.  Perhaps the easiest way to gain insight is to consider the parallel to the normal case in Section~\ref{sec:norm}, where is particularly easy to understand the $T_{k.m}$ expression in (\ref{e:tkm}), as a weighted average of the sample mean and the prior mean, with weights proportional to their respective precisions. As is well known, this weighted average is the backbone of the much celebrated shrinkage  estimation from a Bayesian perspective \citep[e.g.,][]{efron1973stein}. Hence we view Assumption~\ref{a1} as an assumption of \textit{functional shrinkage} because it requires both the posterior mean and variance as functions of the standard linear shrinkage   estimator as in the normal example of Section~\ref{sec:norm}, where   $T(x)=x$.  

The comparison with the normal example also gives us the insight that $m$  can be interpreted in general as the \textit{nominal} PSS measured on the same unit scale as  the likelihood data size.  We say $m$ is \textit{nominal} because the real PSS  must  take into account the potential prior-likelihood discordance,  as emphasized previously. Furthermore, Assumption 1 does not require the existence of the prior mean, but only the existence of $m$ and $\mu_m$ (see   the exponential example in Section~\ref{s:expo}).  Therefore, in general, $\mu_m$ should be regarded as a measure of prior centrality, and is not necessarily the prior mean for $\mu_T$.

We use the notation $\mu_m$ to indicate that the prior centrality  can depend on  $m$. This is obvious when our prior information actually comes from a previous study based on a data set $\{\tilde X_1, \ldots, \tilde X_m\}$, which are i.i.d. samples from $f(x|\theta_{1}),$ where  $\theta_1$ may differ from $\theta_0$, the generating  value for our data $\{X_1, \ldots,  X_n\}$.   Assuming the previous Bayesian analysis used the same baseline prior $\pi_b$, we know from (\ref{e:conjb}) that the prior mean for $\mu_T$ will be approximately $\tilde  T_m$,  the average of $\{T(\tilde X_i), i=1,\ldots, m\}. $

 The simple concept that we can approximate $ \mu_m$ by $\tilde T_m$ turns out to provide rather useful  insights for forming an appropriate asymptotic regime, a regime that permits $m$ to grow with $k$ such that $k/(k+m)$ stays within the interval $(0,1)$. Specifically,  if we let  $\Delta=\sqrt{m}(\mu_m-\mu_T)/\sigma_T$,  then the fact that $\Delta \approx \sqrt{m}(\tilde T_m-\mu_T)/\sigma_T$  means that even under the assumption $\theta_1=\theta_0$,  $\Delta^2$ will not approach  zero,  because  $\Delta^2$ is a test statistic---based on data $\tilde T_m$---of the null hypothesis $H_0: \theta_1=\theta_0$;
its asymptotic null distribution, as $m\rightarrow \infty$, is the  chi-squared  distribution  $\chi^2_1$, which is exact in the normal example of Section~\ref{sec:norm}. It is therefore meaningful in our asymptotic regime to consider $\Delta$ as fixed while permitting  $m$ to grow,  because $\Delta$ provides a probabilistic yardstick for  assessing   how  the prior data set, as a proxy for the prior information, differs from the  current data set  used for the likelihood function.  
Consequently, we build our asymptotic regime under the following assumption: 
\begin{assumption}\label{a2} For the $\mu_m$ given in Assumption~\ref{a1}, we assume that it can be expressed as 
\begin{equation}\label{e:prim}
\mu_m=\mu_T+ \Delta\frac{\sigma_{T}}{\sqrt{m}}+O_{p}(m^{-1})
\end{equation}
for some fixed constant $\Delta\in \mbR.$  
\end{assumption}

As shown shortly, the two assumptions above play a critical role in establishing an asymptotic expression for  $R(k) = M(k)/k$.  The next assumption is of a technical nature to ensure that our asymptotic expression is unique, and it holds trivially  in virtually all applications. Nevertheless it is needed for eliminating  pathological cases  where properties that hold in probability, as  in  (\ref{e:conj}),  fail to hold almost surely, as required by Assumption~\ref{a3}; for practical purposes, this difference is almost immaterial.    

\begin{assumption}\label{a3}
We assume  (i) both $\hat U_{\pi,\hat\theta_n}(I)$ and $\hat U_{\pi_b,\hat\theta_n}(I)$ converge  almost surely to zero  as  $I\rightarrow\infty$, and (ii) for any finite stopping time $\hat I$,  $\hat U_{\pi_b,\hat\theta_n}(\hat I)>0$, almost surely.   
\end{assumption}

\note{}We are now ready to state our main theoretical results;  see Appendix~\ref{s:proofthm1} for proof.
   
\begin{theorem} \label{t:main} Assume $\hat D$ as defined in \eqref{e:etheo} and  that both  $k$ and $m$ increase to infinity with  $n$, with the restriction   $k=O(n^{1/2})$ and that $r=m/k$ is strictly bounded away from zero and infinity even at its limit.  Letting $c= [u'(\mu_{T})]^2\sigma^2_T/\{[u'(\mu_{T})]^2\sigma^2_T+v(\mu_T)\}\le 1$, we then have the following results.
  
\begin{description}
\item[(A)] Under  Assumptions 1 and 2, any $M(k)=kR(k)$, where 
\begin{equation}\label{e:keyr}
R(k)=R_r(\Delta^2)+O_p(k^{-1/2}),\ \text{with}\ R_r(\Delta^2)
=r\left( 1-   \frac{c(1+r)(\Delta^2 - 1)}{ (1+r) + cr(\Delta^2 - 1)} \right),
\end{equation}
gives the PSS to the order of $O_p(k^{-1/2})$, in the sense that it satisfies
\begin{equation}\label{eq:one}
\frac{\hat U_{\pi,\hat\theta_n}(k)}{\hat U_{\pi_b,\hat\theta_n}(k+M(k))}=1+O_{p}(k^{-1/2}).
\end{equation} 
 \item[(B)] Further, under Assumption~\ref{a3}, (\ref{eq:one}) holds if and only if (\ref{e:keyr}) holds. 
\end{description} 

\end{theorem}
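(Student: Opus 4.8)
The plan is to prove (A) constructively, by computing the leading asymptotic forms of $\hat U_{\pi,\theta_0}(k)$ and $\hat U_{\pi_b,\theta_0}(k)$ from Assumptions~\ref{a1}--\ref{a2} and then solving the matching identity \eqref{e:solv} in closed form; and to prove (B) by invoking Assumption~\ref{a3} to invert the (asymptotically strictly decreasing) baseline uncertainty function uniquely, which upgrades the constructive solution of (A) into an equivalence. First I would substitute the posterior-moment expansions \eqref{e:conj} into $\hat D$ of \eqref{e:htheo}, obtaining $\hat D(\pi(\theta|\vec{X}_k)) = v(T_{k,m})/l + [u(T_{k,m}) - \hat\theta_0]^2 + O_p(l^{-2})$ with $l=k+m$. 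Two reductions make this tractable. Since $k=O(n^{1/2})$, the full-data estimator satisfies $\hat\theta_0 = \E_{\pi_b}[\theta\mid X_1,\dots,X_n] = u(\mu_T)+O_p(n^{-1/2}) = \theta_0 + O_p(k^{-1})$ by \eqref{e:conjb} and the consistency identity $\theta_0=u(\mu_T)$, so we may replace $\hat\theta_0$ by $\theta_0=u(\mu_T)$ at the cost of a negligible term. And because $k\ll n$, each size-$k$ subsample statistic $\bar T_{\bomega_k}$ behaves like that of a genuine size-$k$ sample, so the $U$-statistic average in \eqref{e:dsamp} concentrates about the population expectation $\E[\,\cdot\mid\theta_0]$ with fluctuation smaller than the $O(k^{-1})$ leading term. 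This reduces (A) to a deterministic moment computation.

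The core step is the second moment of $T_{k,m}$. From \eqref{e:tkm}, Assumption~\ref{a2} (so that $m^2(\mu_m-\mu_T)^2 = m\Delta^2\sigma_T^2 + O(\sqrt m)$ via $\mu_m-\mu_T = \Delta\sigma_T/\sqrt m + O_p(m^{-1})$), and $\E[\bar T_k\mid\theta_0]=\mu_T$, $\Var[\bar T_k\mid\theta_0]=\sigma_T^2/k$, a direct expansion gives $\E[(T_{k,m}-\mu_T)^2\mid\theta_0] = \sigma_T^2(m\Delta^2+k)/(m+k)^2 + O(k^{-3/2})$. Writing $w=[u'(\mu_T)]^2\sigma_T^2$ so that $c=w/(w+v(\mu_T))$, Taylor-expanding $u$ to second order about $\mu_T$ and using $v(T_{k,m})=v(\mu_T)+O_p(l^{-1/2})$, I obtain
\begin{equation*}
\hat U_{\pi,\theta_0}(k) = \frac{v(\mu_T)}{m+k} + \frac{w\,(m\Delta^2+k)}{(m+k)^2} + O_p(k^{-3/2}),
\end{equation*}
and, setting $m=0$, $\hat U_{\pi_b,\theta_0}(k) = \{w+v(\mu_T)\}k^{-1}+O_p(k^{-3/2})$; both are consistent with \eqref{e:equa}--\eqref{e:equb} in the normal case, where $c=1/2$. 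Substituting $m+k=k/r$ and $m=k(1-r)/r$ and solving $\hat U_{\pi,\theta_0}(k)=\hat U_{\pi_b,\theta_0}(k+M(k))$ for $l'=k+M(k)$, the common factor $w+v(\mu_T)$ cancels and the denominator bracket collapses to $1+c(1-r)(\Delta^2-1)$, giving $l'=k/\{r[1+c(1-r)(\Delta^2-1)]\}$ and hence $R(k)=l'/k-1=R_r(\Delta^2)+O_p(k^{-1/2})$, which is \eqref{e:keyr}; propagating the remainders through the division against the $O(k^{-1})$ denominator yields \eqref{eq:one}, proving (A). I would emphasize that the $O_p(k^{-1/2})$ relative error is genuine rather than slack: it is dominated by the first-order Taylor terms in $u$ and $v$, whose expectations are nonzero and of order $(m+k)^{-1/2}$ precisely because the prior offset $\mu_m-\mu_T$ is of order $m^{-1/2}$.

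For (B), part (A) already supplies the ``if'' direction, so only the converse requires Assumption~\ref{a3}. Suppose some $M(k)$ satisfies \eqref{eq:one}. By Assumption~\ref{a3}(i)--(ii) the function $\hat U_{\pi_b,\theta_0}$ is almost surely positive along the sequence and decays to zero, and the expansion above shows it is asymptotically $\{w+v(\mu_T)\}/l$ and hence eventually strictly decreasing; the map $l\mapsto\hat U_{\pi_b,\theta_0}(l)$ is therefore almost surely injective on the relevant range, so the value $\hat U_{\pi,\theta_0}(k)$ pins down $k+M(k)$ uniquely. Matching it against the value already identified in (A) forces $k+M(k)=l'(1+O_p(k^{-1/2}))$ and thus $R(k)=R_r(\Delta^2)+O_p(k^{-1/2})$, completing the equivalence. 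The sole role of Assumption~\ref{a3} is to upgrade the in-probability statements of Assumption~\ref{a1} to the almost-sure positivity and monotonicity needed to invert $\hat U_{\pi_b,\theta_0}$ without pathology.

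I expect the main obstacle to be the rigorous control of the remainder in the ratio \eqref{eq:one} to the stated order $O_p(k^{-1/2})$. This demands (i) a $U$-statistic concentration bound for the subsample average in \eqref{e:dsamp} that remains valid as the subsample size grows with $k=O(n^{1/2})$, (ii) handling the weak dependence between each subsample statistic $\bar T_{\bomega_k}$ and the full-sample quantity $\hat\theta_0$, and (iii) verifying that the Taylor remainders of $u$ and $v$ are uniform enough to survive division by the $O(k^{-1})$ denominator. By contrast, the inversion underlying (B) is conceptually delicate but technically light once Assumption~\ref{a3} furnishes almost-sure monotonicity.
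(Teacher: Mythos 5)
Your proposal is correct and follows essentially the same route as the paper: your leading-order expansions of $\hat U_{\pi,\theta_0}(k)$ and $\hat U_{\pi_b,\theta_0}(k)$ coincide exactly with the paper's Lemma~\ref{l1} (its $\alpha/k$ and $\beta/k$ are your two displays), part (A) is then the same closed-form solve of the matching equation, and part (B) likewise leans on Assumption~\ref{a3} to invert the baseline uncertainty. The only places where you assert what the paper actually proves are the $U$-statistic concentration bound for the subsample average (which the paper gets from the variance inequality for the kernel $(X_1+\cdots+X_k)^2/k$) and, in (B), the explicit exclusion of the degenerate case in which $k+M(k)$ stays bounded along a subsequence--this, rather than a generic injectivity claim, is precisely where Assumption~\ref{a3}(ii) is used.
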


 Expression (\ref{e:keyr}) illustrates the role of $\Delta^2$ in determining the  behavior of $R(k)=M(k)/k$. As in the normal example, when $\Delta^2=1$, $R_r(1)=m/k=r$, implying that $M(k)$ will recover the nominal PSS $m$ asymptotically.  When $\Delta^2\rightarrow \infty$, representing the extreme prior-likelihood conflict, $R_r(\Delta^2)$ goes to its lower limit $-1$; clearly $R_r(\Delta^2)$ decreases strictly monotonically to $-1$ as $\Delta^2$ increases to $\infty$.      

At the other extreme, that is, when $\Delta=0$, we see that because we can write 
\begin{equation}\label{e:ragene}
R_r(\Delta^2)=rA_r(\Delta^2), \quad \text{where}\quad A_r(\Delta^2)=1-\frac{c(1+r)(\Delta^2-1)}{1+r+rc(\Delta^2-1)},
\end{equation}
we have $R_r(0)=rA_r$, with (recall $0\le c \le 1$)
\begin{equation*}
 A_r=1+\frac{c(1+r)}{1+r(1-c)} \ge 1.
\end{equation*}
Therefore, asymptotically, $M(k)$ is larger than the nominal size $m$ by the  factor $A_r$. This is the beneficial discordance  phenomenon seen in the normal example, where $c=1/2$.  

Intriguingly,  $c=1/2$ holds for a wide range of models.  
Specifically, let us assume the usual large-sample equivalence between  the likelihood inference and the Bayesian inference  under our baseline prior $\pi_b$, that is, as $k\rightarrow \infty$, the posterior variance of $\theta$, $\Var_{\pi_b}[\theta|\bar X_k]$ is almost surely  the same as the sampling variance of the posterior mean $\E_{\pi_b}[\theta|\vec{X}_k]$.\ Then we have 
from (\ref{e:conjb}),  by the $\delta$-method,  that 
\begin{equation}\label{e:supr}
1=\lim_{k\rightarrow \infty}\frac{V\left[\E_{\pi_b}(\theta |\vec{X}_k)\big\vert \theta\right]}{V_{\pi_b}(\theta |\vec{X}_k)}= \lim_{k\rightarrow \infty}\frac{[u'(\mu_{T})]^2\sigma^2_T/k}{v(\mu_{T})/k}
=\frac{[u'(\mu_{T})]^2\sigma_{T}^2}{v(\mu_T)},
\end{equation} 
and hence the $c$ as specified in Theorem \ref{t:main} is $1/2$. In Appendix B, we  will verify that (\ref{e:supr}) holds for all models examined there. Furthermore, when $c=1/2$,   the increase in PSS due to beneficial discordance  is always an additional $(1+r)/(2+r)$ percent of information, and hence it is between 50\% and 100\%, exactly the same as in the normal example. 
 This is  an unexpected finding, especially because of its simple and general  nature.   The assumption (\ref{e:supr})  holds rather generally because  of the standard asymptotic equivalence between the likelihood and Bayesian inferences.   For some convolution families under the single observation unbiased prior \citep[SOUP; under which posterior mean is unbiased as a point estimator; see][]{meng2002single},   (\ref{e:supr})  holds exactly for any $k$.  

More generally, we see from (\ref{e:ragene}) that the beneficial  information   kicks in as soon as $\Delta^2<1$, and the amount of increased information is monotone in $1-\Delta^2$. Similarly, when $\Delta^2>1$, the amount of the information lost is a monotone increasing function of $\Delta^2-1$. 
This result also says that when $|\Delta^2-1|$ is too small, our method will not be able to   detect the prior-likelihood discordance, especially considering  we can only detect the type of  discordance with the likelihood that is not already presented  in the baseline prior,  as demonstrated   below.

\section{Empirical Illustrations} \label{s:ex}

\subsection{Computational Considerations} \label{s:comp}
To  empirically  demonstrate the performance of our procedure, we first address its computational requirements.  The brute-force implementation of the procedure outlined in Section~\ref{s:diag} can impose a substantial burden because of the need to recalculate posterior summaries (e.g., means and variances) for many different subsamples. For some models, this may simply be infeasible.  However, when 
working with conjugate families, means and variances of the posterior can be immediately calculated, and hence our methods can be carried out very efficiently.  For example, the  simulations presented below were carried out in R making heavy use of vectorization and parallelization, and each simulation study (e.g., the entire normal example)  took  5-30 minutes, depending on the sample size, on a laptop running an Intel i7 processor. 

When one is not working with conjugate families, usually posterior calculations are carried out using MCMC.  Having to obtain thousands of separate MCMC samples is often impractical.  However, this can be sidestepped by a careful use of importance weights, when they are easy to compute.  This can be very effective as the posteriors for two different subsamples are usually quite close.  Therefore, only one or a few large MCMC samples need to be generated, which can then be used for importance sampling.  To illustrate, let $(X_1,\dots, X_k)$ and $(X_1^\prime,\dots, X_{k^\prime}^\prime)$ be two separate subsamples.  If the MCMC samples, $\theta_1,\dots, \theta_{m}$, were generated using $\pi(\theta | X_1, \dots, X_k)$ but  we wish to calculate the posterior mean $\theta'_{post}=\E[\theta | X_1^\prime,\dots, X_{k^\prime}^\prime]$, 
then
we can estimate it by \[
\hat\theta_{post}'=\frac{\sum_{i=1}^{m} w_i \theta_{i}}{\sum_{i=1}^m w_i}
\quad
\text{where}
\quad
w_i \propto  \frac{\pi(\theta_i | X_1^\prime, \dots, X_{k^\prime}^\prime)}{\pi(\theta_i | X_1, \dots, X_k)}.
\]
This approach works well as long as we choose $(X_1, \dots, X_k)$ reasonably wisely  so the (sample) variance of the importance weight $w$  is not too large, because the effective Monte Carlo sample size for $\hat\theta_{post}'$ is largely \textit{bounded below}\footnote{\small We thank the AE for pointing out that this often quoted formula for $m_{\rm eff}$ can be very misleading in cases where the importance sampling weights are designed to \textit{improve} the Monte Carlo efficiency, for which cases a term neglected in deriving $m_{\rm eff}$ is in fact not negligible. } by $m_{\rm eff}=m/(1+\Var(w))$ (see \cite{kong:1992}; \cite{liu1996metropolized}), assuming the underlying MCMC chain has mixed well (and without taking into account the comparison of CPU time).   Using this approach we were able to recreate the normal example, employing parallelization and the Rcpp package (to execute the necessary loops in C++), with a computation time less than 2.5 hours (with the same equipment).  There we took an MCMC sample of size $m=100000$, 10000 subsamples, and $k$ up to 100, with  $m_{\rm eff}$ varying  from 7000 to 60000.

\subsection{A Simulation Study}\label{s:simu}
Here we provide numerical illustrations for the normal and exponential settings.  Throughout,   the estimated measure of uncertainty, $D$,   is  from \eqref{e:etheo}, and  
we take 100000 sub-samples with replacement, to construct the estimates $\hat{U}_{\pi, \hat\theta_n}$ and $\hat{U}_{\pi_b, \hat\theta_n}$.  

The goal of this section is to highlight the empirical performance of our procedure and validate the approximation formula given in \eqref{eq:key}.  To that end, we will choose the hyper parameters to reflect a desired level of $r$ and $\Delta$; in all examples below we take 100 replicates, $n=1000$, $r = 0.05$, and consider $\Delta = 0, 0.5, 1, 1.25$ to reflect differing levels of conflict between the prior and the likelihood.  To produce the ``truth" in each setting, we also use a Monte Carlo approach with 100000 draws to approximate the population values for the $U$, which can then be used to approximate the population level values for $M(k)$ and $R(k)$.  These values will be marked as dashed lines throughout the plots.

\subsection*{Example 1: Normal with Known Variance}
The true mean is $\mu_0 =1$ and the variance is $\sigma^2=1$; the latter will be treated as known.  The baseline is taken to be normal with mean zero and infinite variance (i.e.,  a  ``flat prior").   
The first row of Figure \ref{f:normkn}
depicts plots of $\hat M(k)$ for 100 replications given in grey.  Starting from the left column, the plots correspond to $\Delta= 0,0.5,1,1.25$.  A solid black line is included as the cross-sectional average of the grey lines and a dashed red line is given as the population level $M(k)$, computed via Monte Carlo. We see the solid and dash lines are practically the same.  The second row of Figure \ref{f:normkn}
provides boxplots comparing $\hat R(n)$ when using the plots to approximate $\hat M(n)$ (\textit{RPlot}) versus using the asymptotic formula (\textit{QuickEst}). When using the plots, we average the last ten values of $\hat M(k)$ and then divide by $n$ to approximate $R(n)$. 
That is
\begin{equation}\label{eq:mave}
    \hat R_{plot}(n)=\frac{\frac{1}{10}\sum_{k=K-10}^K \hat M(k)}{n},
\end{equation}
where $K=n/2 = 500$ in our simulation studies. For the normal case, the asymptotic formula is exact and thus we see that the two procedures agree as we would expect.    

Lastly, we note that the slope of the $\hat M(k)$ plots reflect the level of discordance between the likelihood and prior.  The case of ``no conflict" corresponds to $\Delta = 1$ which is the third plot and has basically a slope of 0.  There the effective sample size of the prior is about 50, which corresponds to $ r=0.05$ with $n=1000$.  As $\Delta$ moves away from 1, we see negative slopes reflecting the discordance. However, for the two $\Delta$ smaller than 1,  the discordance is beneficial because it leads to larger effective sample sizes, about 70-80, demonstrating the super information phenomena.

\begin{figure}
    \centering
    \includegraphics[width=1\textwidth]{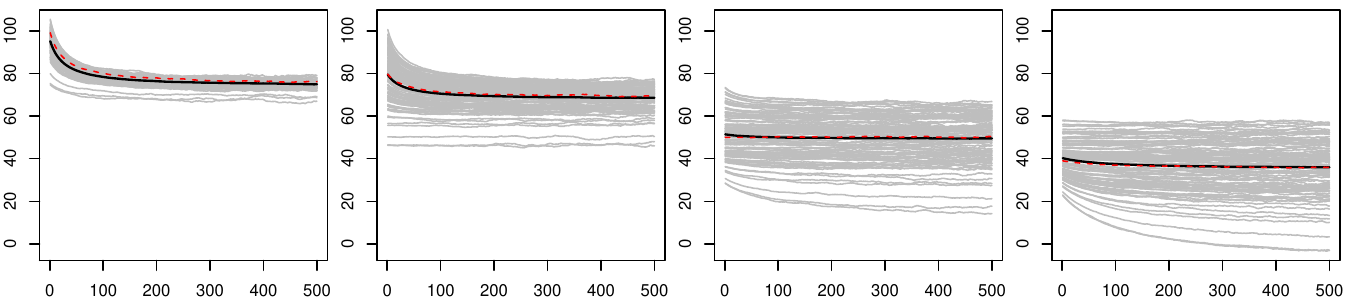}
    \includegraphics[width=1\textwidth]{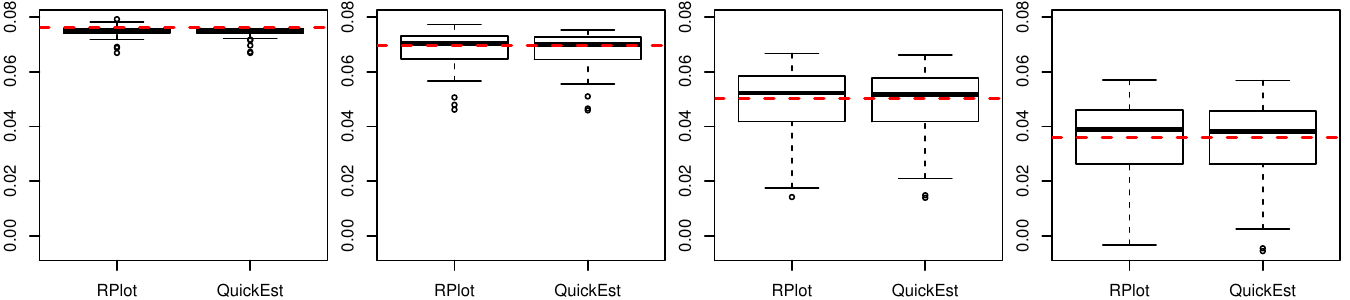}
    \includegraphics[width=1\textwidth]{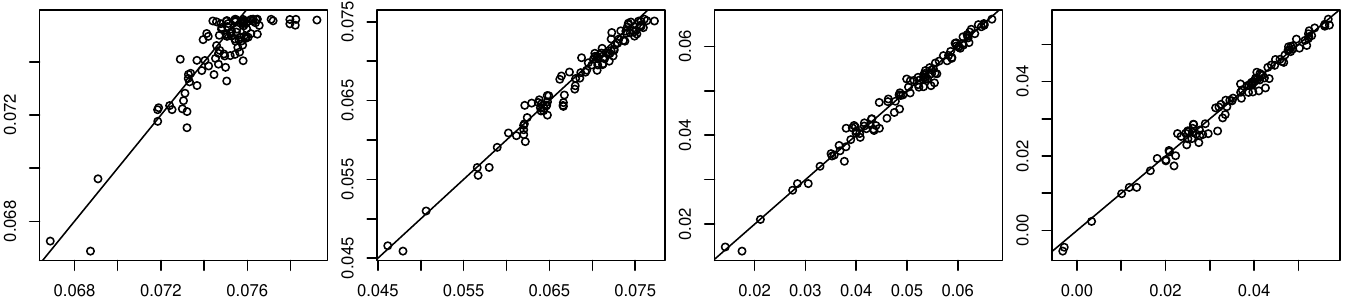}
    \caption{Normal Distribution. From the left, the four columns correspond to $\Delta=0, 0.5, 1, 1.25$ respectively. The first row displays 100 replications of the $\hat M(k)$ curve, with their point-wise average indicated as the solid line.The second and third rows compare the two estimates of $R(k)$, \eqref{eq:mave} and \eqref{eq:key}, via boxplots and scatter plots (with the QuickEst method on the y-axis).  In all plots, the red dash-line represent the estimand, which itself is approximated by using 100000 Monte Carlo samples.}  
    \label{f:normkn}
\end{figure}

\subsection*{Example 2: Exponential under  Two Parameterizations}\label{s:expo}
We now assume that $X_1,\dots, X_n$ are i.i.d. exponential random variables with mean $\mu=\lambda^{-1}=1/2$ and variance $\sigma^2 = \lambda^{-2}=1/4$.  By comparing the case of $\theta=\mu$ with  $\theta=\lambda$, we explore the nature of prior-likelihood discordance with respect to parameterizations. 
As we will see, while the two parametrizations do differ, the differences are not drastic.

\begin{figure}
    \centering
    \includegraphics[width=1\textwidth]{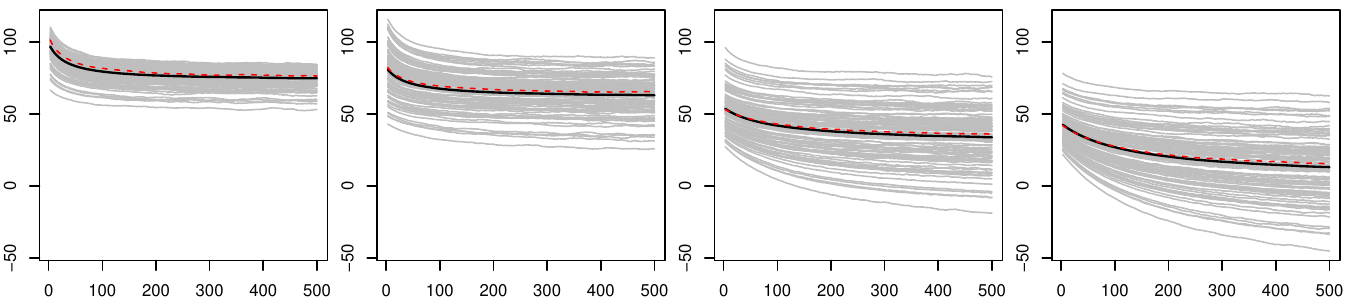}
    \includegraphics[width=1\textwidth]{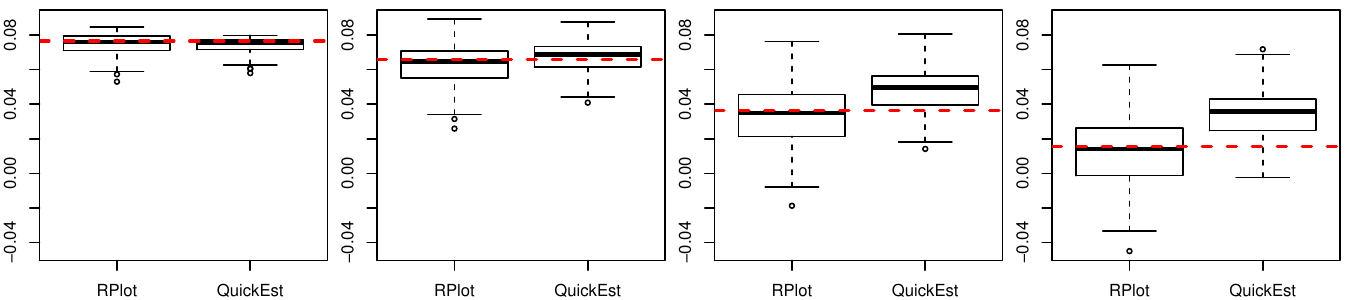}
     \includegraphics[width=1\textwidth]{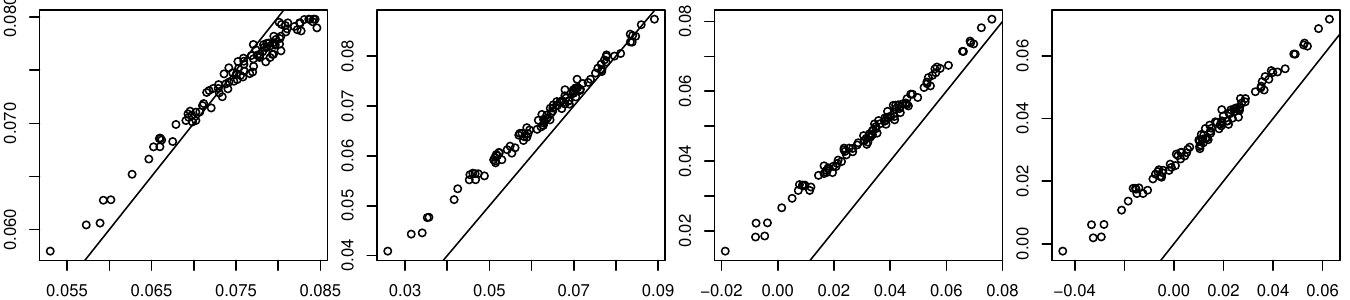}
    \caption{Exponential Distribution with $\theta=\lambda$.  The same caption for Figure~\ref{f:normkn} applies.} 
    \label{f:exp_rate}
\end{figure}

\begin{figure}
    \centering
    \includegraphics[width=1\textwidth]{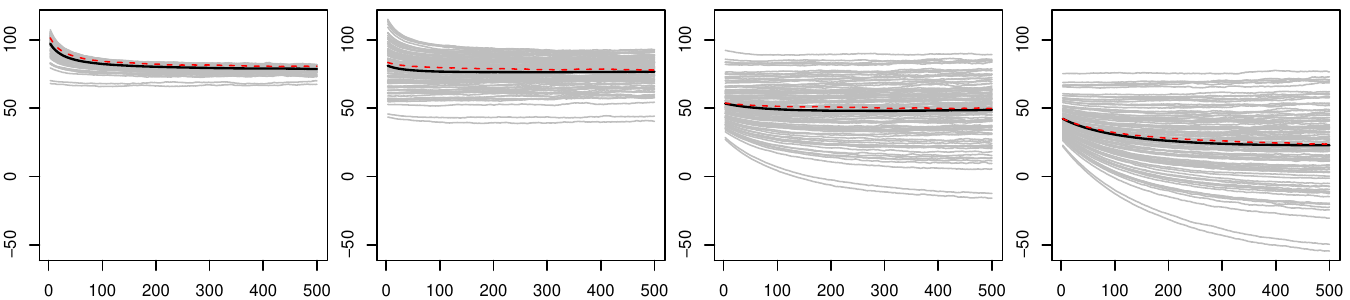}
    \includegraphics[width=1\textwidth]{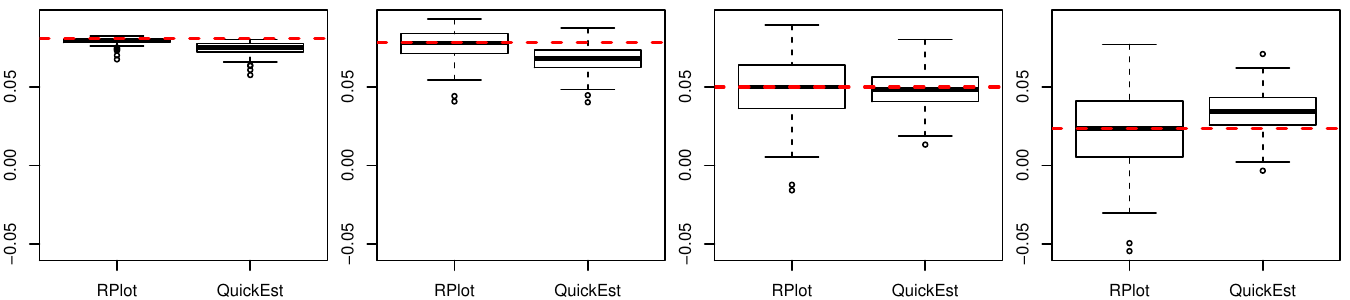}
        \includegraphics[width=1\textwidth]{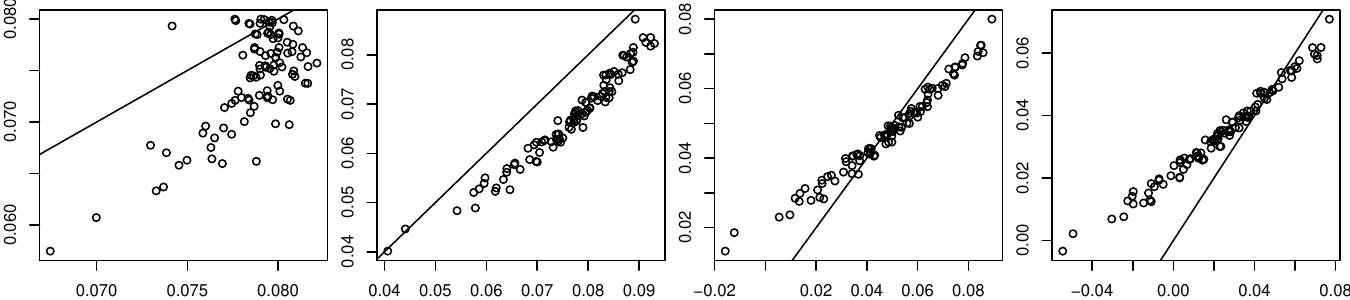}
    \caption{Exponential Distribution with $\theta=\mu$.  The same caption for Figure~\ref{f:normkn} applies. }
    \label{f:exp_mean}
\end{figure}

The  conjugate prior on $\lambda$ is gamma, $\Gamma(\alpha,\beta)$, while the conjugate prior on $\mu$ is then the inverse gamma $\Gamma^{-1}(\alpha,\beta)$.  
Our baseline is given by taking $(\alpha,\beta) \to 0$, which yields $\pi_b(\theta) \sim \theta^{-1}$, regardless of whether $\theta=\lambda$ or $\theta=\mu$, and corresponds to the Jeffreys prior for this likelihood.  The posterior of $\lambda$ under the $\Gamma(\alpha,\beta)$  prior is also a gamma distribution with mean and variance 
\begin{align}\label{e:expe}
& \E_\pi[\lambda |\vec{X}_k] = \frac{\alpha + k}{\beta+ k \bar{X_{k}}},
&\Var_\pi[\lambda |\vec{X}_k] =  \frac{\alpha + k}{(\beta+  k\bar{X}_{k})^2}. \end{align}
Similarly, the posterior for $\mu$ under the $\Gamma^{-1}(\alpha,\beta)$ prior is the  inverse gamma with mean and variance 
\begin{align}\label{e:expe2}
& \E_\pi[\mu |\vec{X}_k] = \frac{\beta+ k \bar{X_{k}} }{  \alpha + k - 1},
&\Var_\pi[\mu |\vec{X}_k] =  \frac{(\beta+ k \bar{X_{k}})^2}{(\alpha + k - 1)^2(\alpha + k - 2)}. 
\end{align}

The results for the rate and mean parametrizations are given in Figures \ref{f:exp_rate} and \ref{f:exp_mean} respectively.  The plots are analogous to the normal plots given in Figure \ref{f:normkn}.  We once again see the plots for $\hat M(k)$  are centered around the population quantity, as are the boxplots for $\hat R_{plot}(n)$.  However, unlike the normal case, the asymptotic formula is less accurate. While the distortion is still not extreme and hence the asymptotic formula can still be used as a quick approximation, it highlights the benefits of our more computationally involved algorithm.

In terms of the parametrization, there are slight differences.  In particular, the variability of our procedure is higher for the rate parameter $\lambda$ than for the mean parameter $\mu$ and our asymptotic approximation seems to work better for the mean as well.  However, the broader message is essentially the same. The asymptotic formula provides a computationally  cost-effective approximation, but the full algorithm is useful, especially as we move away from normality and linear estimators.

\subsection{Application:  Logistic Regression for Predicting Lupus} \label{s:app}
We apply our methods on a data set provided by Dr. Haas, a client at the University of Chicago's consulting program, as reported in \cite{dyk:meng:2001}.
The data set consists of 55 patients, 18 of which have membranous lupus nephritis also known as stage V lupus.  We also have measurements on the difference between immunoglobulin G3 (IgG3) and G4 (IgG4).  \cite{haas:1994} was interested in the relationship between this difference and the presence of stage V lupus.  To that end, a logistic regression model on disease status was used where a covariate representing the difference between IgG3 and IgG4 was included.  A summary of the data (in counts) is reported below.

\begin{table}
\begin{centering}
\begin{tabular}{|r|rrrrr|}
\hline
& \multicolumn{5}{c|}{IgG3 - IgG4} \\
\hline
 Lupus & 0 & 0.5 & 1 & 1.5 & 2 \\ \hline
0 &  31 &   2 &   2 &   0 &   2 \\ 
  1 &   5 &   0 &   2 &   6 &   5 \\ \hline
\end{tabular}
\end{centering}
\caption{Counts of patients with and without stage V lupus against IgG3-IgG4, the difference between immunoglobulin G3 and G4 levels. \label{t:data}}
\end{table}

\cite{gelman:jakulin:pittau:su:2008} investigated  the idea of a \textit{weakly informative prior}, and for logistic regression suggested, after standardizing appropriately, that one use a Cauchy prior with a scale of 2.5 on the slope parameter.  We use our methodology to explore how weak or strong such a prior really is.  We evaluate candidate priors from the Cauchy distribution with scales 2.5, 5, and 10 against a baseline prior,  taken to be Cauchy with a scale of 10000.  The results end up being fairly robust against the choice of baseline, as we also tried Cauchy  with scales 100 and 1000, as well as a normal baseline, as reported in Section \ref{a:app} of the online supplement.  We use the metric \eqref{e:dsamp} to compare the two priors.  However, instead of taking the mean of this metric over subsamples, we take the median to combat the well-known small-sample instability of logistic regressions, and for the same reason, we examine $\hat M(k)$ only for $ k > 10$.  (To be more consistent with the median approach, in the online supplemental we also explore using the mean absolute deviation for $D$ instead of the MSE, however the results are nearly identical.) To reduce the impact of the nonlinear part of $\hat M(k)$ on the estimation of $R(55)$, we take $k_0=20 $ to approximate $R(55)$ by the least squared estimator based on $k=20, \ldots, 35$.  

\begin{figure}[ht]
\begin{center}
\includegraphics[width=0.33\textwidth]{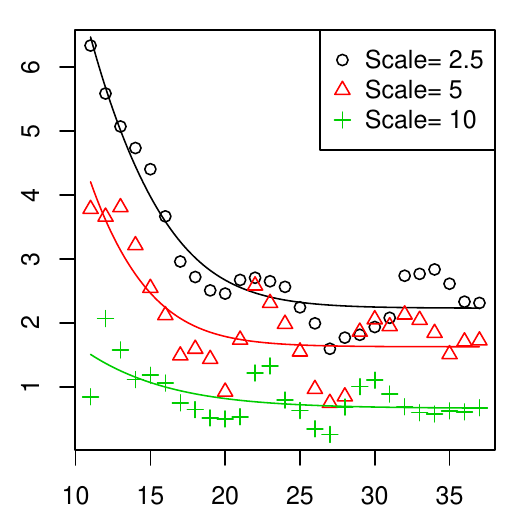}
\vspace{-.2in}
\end{center}
\caption{Plot of the estimated PSS $\hat M(k)$ for the  application in Section \ref{s:app}.  Estimated slopes are $-0.1173$,  $-0.0646$, and $-0.0283$ using $k>10$, for scales $2.5, 5$, and $10$ respectively. The estimated slopes become 0.0032,  -0.0024, and -0.0079 respectively when using  $k>20$. \  \label{f:app}}
\end{figure}

The results are plotted in Figure \ref{f:app} with the slopes given in the caption, using both $k_0=10$ and $k_0=20$.  The plots are a bit more chaotic than in our simulations due, likely, to the  aforementioned instability of logistic regression with small sample sizes, but none of them suggests more than  6  PSS. The prior suggested by \cite{gelman:jakulin:pittau:su:2008}, that is, with scale=2.5, seems to indeed depict a \textit{weakly informative prior},   equivalent to between 2  and 6 data points, which is  no more than $10\%$ of the information provided by the likelihood.  There might be some small amount of prior-likelihood discordance.  By taking the scale up to 5 or 10, the discordance is reduced, and so is the prior impact.  Indeed, the slope estimators based on $k> k_0={20}$
are essentially zero regardless of the scale, indicating essentially negligible  prior impact  with $n=55$.  Such practical, quantifiable, and  interpretable assessments can help greatly to strengthen our inferential conclusions and to communicate them convincingly, by reducing both the impact and the appearance of ad hoc choices made during our inference process. Evidently it is more scientific to numerically demonstrate that the impact of a prior is no more than adding $10\%$ of data than to simply declare that it is weakly informative.
For more studies on weakly informative prior, see \cite{gelman:2006} and \cite{polson:scott:2012}. 

We remark that logistic regression is a telling example about why it is important to formulate  PSS  as a measure relative to the likelihood function, and hence it is data dependent.
It is well-known that when the observed data exhibit a (nearly) perfect separation pattern, i.e.,  when a predictor  (nearly) perfectly separates those with positive outcome from those with negative outcome,  we will run into a (nearly) non-identifiability issue. This issue occurs rather frequently in practice, and Bayesian methods have been suggested as an effective way to address the problem, as detailed in \cite{gelman:jakulin:pittau:su:2008}, and more recently in \cite{rainey2016dealing}, who  concluded  that  ``When facing separation, researchers must \textit{carefully} choose a prior distribution ...." (emphasize is original). But since separation is a data-dependent phenomenon, and the prior is brought in to combat the issue that the likelihood function is too flat (e.g., MLE does not exist) to form a proper posterior without a suitably informative  prior,  the relative contribution of the prior information must be data dependent. Furthermore, to be meaningfully ``careful" in choosing such a prior, we need to be able to quantify the amount of contribution of various choices and hence one can judicate based on quantifiable evidence, which should also help to communicate---and hence to ensure the reproducibility of---our findings.

\section{Responding to Reviewers: Old Contemplation and New Explorations} \label{s:disc}

\subsection{Why do we still work on such an old problem?}\label{s:old}

One reviewer expressed a strong disbelief that the Bayesian paradigm had been promoted for so long without having had settled on such a basic assessment about prior influence. We were skeptical as well when we started this project, but the progress we made (since 2012) has shown us why quantifying 
prior information is fundamentally as problematic as the use of an improper prior, a matter of ongoing debate. Indeed, another reviewer questioned our use of an improper prior as the base prior. They asked if one does not have a proper prior to start with, then should one question the use Bayesian methods in the first place? Philosophically, we agree with the reviewer,  because it is a mathematical fact that the Bayesian paradigm cannot handle complete ignorance \citep[e.g.,][]{martin2016inferential}. Practically, we also agree that whenever meaningful, we should use a proper prior as the base, which is not restricted by our proposed framework in any way.

There are inferential paradigms that can quantify complete ignorance in fully logical and mathematical ways, such as belief functions \citep{dempster1968,shafer1976}. However, a recent investigation \citep{gong2020judicious} reveals that there is currently no known inferential paradigm that can do so without having to pay a ``leap-of-faith" price somewhere else (e.g., trading indeterminacy in updating rules for that in prior specifications). The implication is that we have to live with imperfection or even paradoxes one way or anther. Using improper priors seems to be the least problematic for now, given its wide-spread practice, which does not justify its use, but it does suggest that we know much more about its pros and cons, or at least know where to look for source of troubles. The following is such an example.    

\subsection{A Prior Sample Size Paradox?}\label{s:para}

We are grateful to yet another reviewer for raising an intriguing question regarding what should be viewed as nominal prior size in Example 2 discussed above. The reviewer pointed out that, inspecting the density function based on an i.i.d. sample with size $n$, we would see a term $\lambda^{n}$ in  the exponential likelihood. When this is  compared to the term $\lambda^{\alpha -1 }$ in the conjugate Gamma prior,  it seems natural  to define the  ``nominal  prior size" as $n_0=\alpha -1$, instead of what we defined as $m=\alpha$. Furthermore, since $n_0=m-1$, with our definition,
we seem to run into a negative nominal prior size when $m=0$.  This is undesirable because as a  \textit{nominal} prior size, the worst it can be should be zero, representing   no prior information. A negative nominal prior size would suggest  that we know \textit{a priori} the degree of prior-likelihood conflict before seeing any data, which would be an odd position to take, if not illogical.     

However, if we follow the reviewer's definition,   then our prior mean would be  $\E(\lambda)=(n_0+1)/\beta$. This would imply that even when the prior size is zero, meaning  there is no prior information
whatsoever, we would still have a finite ``prior mean" $\E(\lambda)=1/\beta$. This  seems  at least as illogical as having a negative prior sample size. This dilemma is rooted exactly in the indeterminacy  of an ``ignorant prior" discussed in Section~\ref{s:old}: if the prior is the posterior obtained from a previous study of sample size $n_0$, what was the prior that led to that posterior? If that was the  constant prior on $\lambda$, then that posterior would be $Gamma (n_0+1, \beta)$, and hence $\alpha=n_0+1$, consistent with the reviewer's suggestion. However, if that prior was Jeffreys prior $\pi(\lambda)\propto \lambda^{-1}$, which the reviewer also suggested to be a natural choice, then that posterior would be 
$Gamma(n_0, \beta)$, and hence $\alpha=n_0$, as we formulated.  Hence  ``looking directly at the likelihood form" and ``using Jefferys prior" cannot be   ``natural" simultaneously.

 Although the meaning of ``natural"  is debatable  in this context, we see a more compelling reason to adopt the latter,  since there are various theoretical justifications  for using the Jeffreys prior \citep{gutierrez1997exponential}. The former runs into the danger of mixing the form of $\lambda$ arising from normalizing constant for the \textit{sampling distribution} with the form of $\lambda$ arising from modeling it directly. As far as we are aware of, there is no theoretical justification on why the meanings of the powers in these two different usages of $\lambda$ should be  the same.    Of course, the central difficulty here lies in the fundamental impossibility of using a probabilistic distribution to represent ignorance, and hence it further illustrates the necessity of choosing a baseline when we assess  prior contribution.  

\subsection{Should we always check likelihood and prior, regardless of prior contribution?}\label{s:whole}

The answer to this reviewer's question of course is a resounding YES. We should always worry about the inadequacy of any part of our model, even or especially when we cannot check it. As many would argue, rightly, checking the likelihood is more important than assessing the prior and it should be done first, because it is the likelihood that permits us to make (Bayesian) inference from data to parameters. The prior serves typically an important but nevertheless supplemental role, except when we have very little information from our likelihood. But it is exactly because of the perceived supplemental role of our prior that we need to have some reasonable ways to assess the actual impact of a posited prior relative to the likelihood contribution, even if just for the purposes of calibrating with our original expectations.  For example, in the logistic model in Section~\ref{s:app}, we surely should check the adequacy of the logistic model first.  But once it is adopted for whatever reasons (e.g., convenience), then if our intention is to use weak priors for its parameters, we should at least to check whether these ``weak" priors actually have weak impact; see the mortality example in \cite{gelman1996posterior}, where a seemingly innocent uniform prior on convex curves turned out to be very influential.

This reviewer also emphasized the need to check data-prior conflict as a \textit{falsification of the prior}, an important modeling step regardless of the need to assess the prior contribution. We also agree, without getting into the debate about checking ``subjective priors" versus ``objective priors". Indeed, the whole industry of prior predictive and posterior checks \citep[e.g.,][]{box:1980,rubin1984,meng1994posterior, gelman1996posterior} were designed for such purposes, though as the reviewer noted there are multiple complications.  

First, essentially all checks are local in the sense that they can detect only some model defects in the likelihood, prior, or both. This is due to the necessarily limited capacity of the checking/testing statistics or more generally ``realized discrepancy" \citep{gelman1996posterior}. We view this locality a feature rather than a deficiency, because an almighty test would or at least should reject essentially all models, because ``all models are wrong."  George Box's mantra ``but some are useful" reminds us that our job---through judicious choices of assessments---is to ensure the relevant parts are usable. 

Second, not all model defects are consequential for the substantive questions at hand. Again, we avoid this issue by choosing the measure of uncertainty directly reflective of the analysis of interest, as emphasized in Section~\ref{s:meth}. We also agree with the reviewer that even when a defect is inconsequential for one study, it may still be useful to understand it since it can be very consequential for another study using the same model.

\note{
Such choices, however, lead to a third issue. Our uncertainty measure is not invariant to reparametrization, which is the case for the quadratic measure. Whereas we agree with the reviewer that an invariant measure has some general appeal, our proposed methods would not have much practical impact if we do not consider
measures such as MSE, which are well understood and most commonly adopted for good reasons.}

\subsection{Limitations and Future Work}\label{s:future}

Although our method has a number of appealing properties, much more needs to be done. Perhaps  the most important extension  is for problems where sample size is not a good indicator of information, as is typically the case with time series and spatially dependent data.  We   also need to establish theoretical results for scenarios that go beyond those covered in Section~\ref{s:th}, and more critically to cases where the likelihood itself is misspecified in consequential ways. \note{A reviewer also reminded us to study the issue of assessing likelihood-prior combination that could lead to substantial bias, in the sense of creating regions of parameter space that are highly probable \textit{a priori}; see \cite{baskurt2013hypothesis,evans2019measuring}.}  Applications to high-dimensional and/or non-parametric problems are another important direction to explore, and the growing literature on the relationship between prior and posterior concentrations (see for example \cite{pkv:2014} and \cite{strawn:2014} and references therein) may provide some theoretical insight on this exploration.   

In applying our method, we also encountered three practical  problems.  The first is the computational demand.   
Seeking effective computational  strategies is an area of much needed research, and the importance sampling approach presented in Section~\ref{s:comp} merely is a starting point. The second issue involves instability with small $k$. We did not encounter any problem for our simulation studies, where conjugate priors were used.  However, for the lupus nephritis application, we had to avoid small $k$  because logistic regressions can be very unstable for small sample sizes.  Any model which has stability problems for small samples can generate similar  issues.  We found switching the means to medians in our resampling scheme helped, but obviously this creates a discrepancy between the application and the current  theoretical results, which are mean-based, that is, using the $L_2$ norm. Extending our theoretical results to cover other norms, especially the $L_1$ norm, as well as more general choices of the discrepancy or uncertainty  measure $D$ is another direction for future research. Third, we need to search for more reliable estimate of $R(n)$, the relative gain or loss corresponding to the actual data size, as our current extrapolation via the slope of the PSS curve $M(k)$ is more of exploratory  nature. 

A reviewer reminded us that a particularly interesting direction for choosing $D$ involves moving to a prediction based uncertainty measure.  This can help, to a degree, with the parametrization problem as it fixes the scale of the outcome as default.  However, there are at least a few options as to how to construct a prediction based measure.  One possibility is to take a similar approach to the MSE measure we introduced, which involves conditioning on the true underlying parameter.  Another option could be based on the posterior predictive distribution and not conditioning on the true parameters, while yet another option would be similar to cross-validation, where observations not included in the $\vec Y_k$ could be used for evaluating prediction. However, preliminary explorations using cross-validation idea, as in the on-line supplement, are not very encouraging.  In particular, any newly proposed measure for $D$ has to reasonably quantify the bias of the estimates.  As demonstrated in Section \ref{a:s:D} of the Appendix, if this is not done, then results are quite unreliable.  

Finally, we can explore other methodological applications using the idea of assessing discordance via monitoring  $M(I)$.  For example, we can compare two subjective priors constructed by two different investigators, and determine whether  one has more serious discordance with a likelihood function than the other. Or perhaps we can convert  this diagnostic tool into something helpful in selecting a prior via tuning  the measure we proposed, as a \textit{functional} of a candidate prior, according to some sensible criterion. For instance,   we may want our prior to be weakly informative in the sense that the PSS should not exceed, say, 10\% of the likelihood sample size, for a chosen purpose. 

 Going even further, we can extend the idea of comparing two priors to comparing two likelihood functions, by using a common baseline prior.  If one of the likelihood models is saturated, then the conflict between them can be viewed as a misspecification of the other, unless we just have very bad luck. Of course, whether we assess prior-likelihood discordance or misspecification of a likelihood, our general goal  is the same: to be an informed Bayesian, or more generally, an informed statistical analyst.       
 
\section*{Acknowledgements} We thank Andrew Gelman for  insight comments and help regarding Section~\ref{s:app}, Murali Haran for his help in Section \ref{s:comp},  David Jones and Steven Finch for very helpful proofreading,  and a good number of reviewers for their  comments that have led to a  much improved paper.  We also  thank  the U.S. National Science Foundation, National Institutes of Health, as well as the John Templeton Foundation  for partial financial support.
{\baselineskip=14pt
\bibliographystyle{rss}
\bibliography{MeNiRe,inference_final}}

\nocite{evans:1997}
\nocite{evans:moshonov:2006}
\nocite{evans:jang:2011}
\nocite{berger:bernardo:sun:2009}
\nocite{kass:wasserman:1996}

\nocite{lee:1990}

\appendix
\clearpage
\setcounter{page}{1}
\begin{center}
\bf{\large Online Supplemental Material}
\end{center}

\section{Proof of Theorem \ref{t:main}}\label{s:proofthm1}

To prove Theorem \ref{t:main}, we will need the following Lemma on the asymptotic representation of the $\hat U$'s; proof of lemma is given in  Appendix \ref{s:proofs:lem}.

\begin{lemma}
\label{l1}
Suppose $k=O(n^{1/2})$ and $r=m/k$ is strictly between zero and infinity even at its limit as $n\rightarrow\infty$. 
Then, under the Assumptions~\ref{a1} and \ref{a2}, we have: 
\begin{equation}\label{e:expa1}
\hat U_{\pi, \hat\theta_n}(k)= \frac{\alpha}{k}+O_p(k^{-3/2}),\ \text{with}\ \alpha=\frac{1}{1+r}\left\{v(\mu_{T})+\sigma_{T}^2[u'(\mu_{T})]^2 \frac{1+r\Delta^{2}}{1+r}\right\},
\end{equation}
and \begin{equation}\label{eq:bterm}
\hat U_{\pi_b,\hat\theta_n}(k)=\frac{\beta}{k}+O_p(k^{-3/2}),\quad \text{where} \quad \beta = v(\mu_{T})+\sigma_T^2[u'(\mu_T)]^2.\\
\end{equation}
\end{lemma}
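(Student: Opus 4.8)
The plan is to treat the two averaged discrepancies $\hat U(k)$ and $\hat U_b(k)$ through their variance and squared-bias pieces, Taylor-expanding the smooth functions $u$ and $v$ from Assumption~\ref{a1} around $\mu_T$ and then averaging the per-subsample contributions via finite-population sampling moments. The first preliminary step is to pin down the reference point $\hat\theta_0 = \E_{\pi_b}[\theta|\vec X_n]$: applying the baseline expansion (\ref{e:conjb}) to the full sample gives $\hat\theta_0 = u(\bar T_n) + O_p(n^{-1})$, and because $k = O(n^{1/2})$ forces $n^{-1/2} = O(k^{-1})$, a one-term Taylor expansion yields $\hat\theta_0 = u(\mu_T) + O_p(k^{-1})$. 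This is precisely where the scaling $k=O(n^{1/2})$ does its work: it guarantees that every error from replacing a full-sample quantity ($\bar T_n$, the sample variance $S_n^2$) by its population counterpart ($\mu_T$, $\sigma_T^2$) is of order $k^{-1}$ and hence lands harmlessly inside the $O_p(k^{-3/2})$ remainder.

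For the variance component I would substitute the posterior-variance expansion from (\ref{e:conj}), write $T_{k,m} = (1-r)\mu_m + r\,\bar T_{\bomega_k(j)}$ so that $T_{k,m}-\mu_T = O_p(k^{-1/2})$, and expand $v(T_{k,m}) = v(\mu_T) + O_p(k^{-1/2})$. Dividing by $l = k/r$ and averaging over subsamples then produces the leading term $r\,v(\mu_T)/k$ with an $O_p(k^{-3/2})$ error, the higher-order $O_p(l^{-2})$ piece being negligible.

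The squared-bias component is the crux. Here I would expand $\E_\pi[\theta|\vec X_k] - \hat\theta_0 = u(T_{k,m}) - u(\mu_T) + O_p(k^{-1}) = u'(\mu_T)(T_{k,m}-\mu_T) + O_p(k^{-1})$, invoke Assumption~\ref{a2} to write $\mu_m - \mu_T = \Delta\sigma_T/\sqrt{m} + O_p(m^{-1})$, and decompose the deviation into a deterministic prior-induced bias and a random subsampling fluctuation,
\[
T_{k,m} - \mu_T = \underbrace{(1-r)\frac{\Delta\sigma_T}{\sqrt{m}}}_{\tilde b} + r\,\tilde\delta_j + O_p(k^{-1}), \qquad \tilde\delta_j = \bar T_{\bomega_k(j)} - \bar T_n.
\]
Squaring $[u'(\mu_T)]^2(\tilde b + r\tilde\delta_j)^2$ and averaging over all $\binom{n}{k}$ subsamples reduces to the finite-population identities $\binom{n}{k}^{-1}\sum_j \tilde\delta_j = 0$ and the exact sampling variance $\binom{n}{k}^{-1}\sum_j \tilde\delta_j^2 = \tfrac{n-k}{n}\,S_n^2/k = \sigma_T^2/k + O_p(k^{-2})$. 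Substituting $m = k(1-r)/r$ turns $\tilde b^2$ into $r(1-r)\Delta^2\sigma_T^2/k$ and $r^2\,\overline{\tilde\delta^2}$ into $r^2\sigma_T^2/k$, which combine to $[u'(\mu_T)]^2\,r\sigma_T^2[r + (1-r)\Delta^2]/k$. Adding the variance piece gives exactly $\alpha/k$, and the baseline expansion (\ref{eq:bterm}) follows as the special case $m=0$, $r=1$ (so $\tilde b = 0$), giving $\beta/k$.

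The main obstacle I anticipate is not the algebra but the bookkeeping of error uniformity: the $O_p$ bounds in (\ref{e:conj})--(\ref{e:conjb}) must hold uniformly across the $\binom{n}{k}$ subsamples for the per-subsample remainders to survive averaging, and the cross term $2r\tilde b\,\binom{n}{k}^{-1}\sum_j\tilde\delta_j$ must vanish \emph{exactly} (it does, by the zero-mean finite-population identity) rather than merely in order. Care is also needed to confirm that the squared-bias remainder is genuinely $O_p(k^{-3/2})$ and not $O_p(k^{-1})$; the extra half power comes from the cross term between the $O_p(k^{-1/2})$ leading deviation and the $O_p(k^{-1})$ correction, which I would check term by term.
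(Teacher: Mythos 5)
Your proof is correct and follows the same overall architecture as the paper's: split $\hat D$ into posterior variance plus squared bias, Taylor-expand $v$ and $u$ around $\mu_T$, observe that $\hat\theta_0=u(\mu_T)+O_p(k^{-1})$ precisely because $k=O(n^{1/2})$, and reduce everything to the first two moments of the subsample average of $T$. The one genuine difference is the key step of averaging the quadratic term over the $\binom{n}{k}$ subsamples. The paper centers the deviation at $\mu_T$, writing $\delta_{k,m}=l^{-1/2}[\sqrt{r}\,\delta_k+\sqrt{1-r}\,d_m]$ with $\delta_k=\sqrt{k}(\bar T_k-\mu_T)$, and must then control the subsample average of $\delta_k^2$; it does so by recognizing that average as a U-statistic and invoking the variance bound in (\ref{eq:uvar}) to get $\hat\delta_k^2=\sigma_T^2+O_p(k^{-1/2})$, while the cross term is only bounded in order via $\hat\delta_k=\sqrt{k}(\bar T_n-\mu_T)=O_p(k^{-1/2})$. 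You instead center at $\bar T_n$, so the cross term vanishes identically by the zero-mean finite-population identity and the second moment equals $\frac{n-k}{n}S_n^2/k$ exactly, after which only the law of large numbers for $S_n^2$ is needed. Your route is more elementary (no U-statistic machinery) and replaces two order bounds by exact combinatorial identities; the paper's route has the advantage of carrying over unchanged to subsampling with replacement or incomplete enumeration, where the exact without-replacement moments are unavailable. Both yield the same $\alpha$ and $\beta$, and both share the loose end you rightly flag: the per-subsample $O_p(l^{-1})$ and $O_p(l^{-2})$ remainders in (\ref{e:conj}) must be controlled uniformly, or at least in an averaged sense, for the subsample mean to inherit the stated order --- a point the paper also passes over silently.
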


\begin{proof}[of Theorem~\ref{t:main}]
 
 By Lemma~\ref{l1}, expression (\ref{eq:one}) is equivalent to 
\begin{equation}\label{e:aslo}
[1+R(k)][\alpha +O_p{(k^{-1/2})})]=[1+O_p(k^{-1/2})][\beta+[1+R(k)]^{-1/2} O_p(k^{-1/2})]. \end{equation}
We can  verify that (\ref{e:aslo}) is equivalent  to   (\ref{e:keyr}) as long as $\liminf_{k\rightarrow\infty}[R(k)+1]>0$ almost surely, a condition which is necessary    because  otherwise we cannot write $[1+R(k)]^{-1/2} O_p(k^{-1/2})=O_p(k^{-1/2})$, which is needed for the equivalence. 

When $R(k)$ is given by (\ref{e:keyr}),  $\liminf_{k\rightarrow\infty}[R(k)+1]>0$ holds almost surely. This follows because,
otherwise, with positive probability, say $p>0$, there exists a subsequence  $\{k_i, i\ge 1\}$ such that $k_i\rightarrow \infty$ and $1+R(k_i)\rightarrow 0$. But $1+R(k_i)=1+R_{r_{i}}(\Delta^2)+\epsilon_i$, where $r_i=m/k_i$ and $\sqrt{k_i}\epsilon_i=O_p(1)$. Consequently we know with probability $p>0$, $\epsilon_i$ converges to $-(1+R_{r_\infty}(\Delta^2))<0$, where $r_\infty=\lim_{i\rightarrow\infty} r_i$.  Therefore, with probability $p$,  $|\epsilon_i|$ will be bounded away from zero when $i$ is large enough, hence it is impossible for  $\sqrt{k_i}|\epsilon_i|$ to be bounded away from infinity as $k_i$ goes to infinity. This contradicts the fact that  $\sqrt{k_i}\epsilon_i=O_p(1)$. This proves assertion (A).

To prove (B), we need  Assumption~\ref{a3}.   Again we prove this by assuming  with probability  $p>0$, the subsequence  $\{k_i, i\ge 1\}$ defined above exists.
Then for such subsequences the left hand side of (\ref{e:aslo}) goes to zero. But the right hand side can have the zero limit only if $\sqrt{k_i+M(k_i)}=-\epsilon_i/\beta$, where $\epsilon_{i}=O_p(1). $
This means with positive probability (possibly smaller than $p$), $\hat I=\limsup_{i\rightarrow\infty}[k_i+M(k_i)]$ is finite. Hence
with  positive probability  $\liminf_{k\rightarrow\infty}\hat U_{\pi_b,\hat\theta_n}(k+M(k))>0$ under Assumption~\ref{a3}(ii) because $\Pr(U_{\pi_b, \hat\theta_n}(\hat I)>0)\ge \Pr(\hat I<\infty)>0. $  But this contradicts   (\ref{eq:one}) because its left hand side then will go to zero with positive probability for the same reason as above,
yet its right hand side will go to 1 with probability one.  \quad \ Q.E.D. \end{proof}

\section{Proof of Lemma~1}\label{s:proofs:lem}

 \begin{proof}
 For notational simplicity, we abbreviate $\hat U_{\pi, \hat\theta_n}(k)$ and $\hat U_{\pi_b, \hat\theta_n}(k)$ as $\hat U(k)$ and $\hat U_b(k)$ respectively.
Under the assumption that $r=m/k$ is bounded away from zero and infinity, $k$, $m$ and $l=k+m$ are of the same order, hence we can use them  exchangeably when using the $O$ notation.  Let $\delta_k=\sqrt{k}(\bar T_k-\mu_{T})$ and $d_m=\sqrt{m}(\mu_m -\mu_{T})$,  then $\delta_k$ is $O_p(1)$  by the central limit theorem and $d_m=\sigma_T\Delta+O_p{(m^{-1/2})}$ by Assumption~\ref{a2},   and hence  
\begin{equation}\label{e:dkm}
\delta_{k,m}\equiv T_{k,m}-\mu_T=[\sqrt{k}\delta_k+\sqrt{m}d_m]/l=O_p(k^{-1/2}).
\end{equation}
Consequently $v(T_{k,m})-v(\mu_T)=O_p(k^{-1/2})$ by a one-term Taylor expansion.  Assumption~\ref{a1} then allows us to write  
\begin{equation}\label{eq:tv}
\Var_\pi[\theta |\vec{X}_k]=\frac{v(\mu_T)}{l}
+O_{p}(k^{-3/2}).
\end{equation}
For the bias term $B=\E_\pi(\theta |\vec{X}_k)- \E_{\pi_b}(\theta |\vec{X}_n)$, we expand $u({T}_{k,m})$ in (\ref{e:conj}) around $\mu_{T}$ to obtain  
\begin{equation}
\E_\pi(\theta |\vec{X}_k)= 
u(\mu_T)+u'(\mu_{T})\delta_{k,m}+O_{p}(k^{-1});
\quad \E_{\pi_b}(\theta |\vec{X}_n)=u(\mu_{T})+O_{p}(n^{-1/2}). 
\end{equation}
Only one term expansion of $\E_{\pi_b}(\theta |\vec{X}_n)$ is needed because $O_P(n^{-1/2})=O_p(k^{-1})$ under our assumption.  Consequently, we have  
\begin{equation}\label{eq:bsqu}
B^2=\left[u'(\mu_{T})\delta_{k,m}+O_{p}(k^{-1})\right]^2
=[u'(\mu_{T})]^2\delta_{k,m}^2+O_{p}(k^{-3/2}).
\end{equation}
But 
\begin{equation}\label{e:delta}
\delta_{k,m}^2=l^{-2}[\sqrt{k}\delta_k+\sqrt{m}d_m]^2=l^{-2}[k\delta_k^2+2\sqrt{km}\delta_kd_m
+md_m^2].
\end{equation}
From (\ref{eq:tv}) and (\ref{eq:bsqu}), we see that when we take a bootstrap sample of $\hat D$ of (\ref{e:dtheo}) to obtain (\ref{e:dsamp}),  to an error order of $O_p(k^{-3/2})$, it amounts to replacing the $\delta_k^i\equiv \delta^i_k (\vec X_{k})$  term  in (\ref{e:delta})  by its bootstrap average $\hat\delta_k^i \ (i=1,2)$,
which is defined similarly as in (\ref{e:dsamp}). Because $\hat\delta_k=\sqrt{k}(\bar T_n-\mu_T)$, it  differs from its mean, that is, zero, by an order of  $\sqrt{k}O_p(n^{-1/2})=O_{p}(k^{-1/2})$. Hence the middle term on the rightmost hand side of (\ref{e:delta}) can be dropped without introducing more than an error of order $l^{-2}\sqrt{m}O_{p}(1)=O_p(k^{-3/2})$, which is of the same order as the error term in (\ref{eq:tv}) or in (\ref{eq:bsqu}). 

For the $\hat\delta^2_k$ term, we will need to use some standard results for U-statistics (e.g., see Ch. 3 of \citet[][]{lee:1990}). Let $h(X_1, \ldots, X_k)=(X_1+\ldots+X_k)^2/k$. Then $\hat\delta^2_k$ is exactly the U-statistics generated by the kernel $h$, with $X_i=T_i-\mu_T$.  Therefore  it is known  that 
\begin{equation}\label{eq:uvar}
\Var(\hat\delta_k^2)\le \frac{k}{n}\Var[h(X_1,\ldots, X_k)]
=\frac{k}{n}\sigma^4_T (2+\frac{\kappa_T}{k}),
\end{equation}
where $\kappa_T$ is the kurtosis of $T_i.$   This implies that  asymptotically the  $\hat\delta^2_k-\E[\hat\delta^2_k]$ is controlled by the order  $O_p((k/n)^{1/2})=O_{p}(k^{-1/2})$. Therefore, as before, replacing $\hat\delta^2_k$ by $\E[\hat\delta^2_k]=\sigma^{2}_{T}$ in (\ref{e:delta}) introduces an error of order controlled by $l^{-2}kO_{p}(k^{-1/2})=O_p(k^{-3/2})$, no more than what is already permitted by (\ref{eq:tv}) or (\ref{eq:bsqu}).  Expansion (\ref{e:expa1}) then follows because from Assumption~\ref{a2}, $l^{-1}d_m^2=
 l^{-1}[\sigma^2_T\Delta^2+O(m^{-1/2})]=
 l^{-1}\sigma^2_T\Delta^2+O(k^{-3/2}).$

The derivation above clearly is valid when we start it by setting $m=0$, and hence $r=0$ and $\delta_{k,m}\equiv \delta_k$ (then $\Delta$ is immaterial), but this is exactly the proof needed for  (\ref{eq:bterm}).
\end{proof}

\section{Verifying Theoretical Assumptions and Results}\label{s:examples}
This section presents several prior-likelihood examples that
satisfy Assumptions \ref{a1}-\ref{a3}, and verifies the conclusions given in Section~\ref{s:th}.  All of our examples form  conjugate prior-likelihood  pairs with the following exponential forms: $X_i$ has a density of the form
\begin{equation}\label{e:dens}
f(x| \theta) = \exp\{T(x) \eta(\theta)  + \xi(\theta) +B(x) \},
\end{equation}
and the prior is a two-parameter conjugate family
(which includes the NEFs of \cite{morris:1982})
\begin{equation}\label{eq:conj}
g(\theta; a, d) = \exp \left\{a d \eta(\theta) + d \xi(\theta) +\zeta(\theta) + C(a,d)  \right\}.
\end{equation}
As before, letting $\vec{X}_n=\{X_1,\ldots, X_n\}$ denote an independent and identically distributed  sample from (\ref{e:dens}), we then have  that the posterior is proportional to
\begin{align*}
p(\theta |\vec{X}_n) & \sim  \exp\{(ad + n \bar T) \eta(\theta) +(d +n) \xi(\theta) +\zeta(\theta) \}\\
 & = \exp \left\{ \left(\frac{ad + n \bar T}{d+n}\right) (d+n) \eta(\theta) + (d +n) \xi(\theta)  + \zeta(\theta) \right\},
\end{align*}
where $\bar T = (1/n) \sum T(X_i)$.  Therefore
\[
p(\theta |\vec{X}_n )  = g\left(\theta; \frac{ad + n \bar T}{d+n}, d+n \right). \]
This means  (\ref{e:conj}) and (\ref{e:conjb}) hold with $m=d$ and $\mu_m=a$ if for the $g(\theta; a, d)$ family we have
\begin{equation}\label{eq:very}
\E(\theta)=u(a)+O(d^{-1}) \quad\text{and}\quad
\Var(\theta)=\frac{v(a)}{d}+O(d^{-2}),
\end{equation}
where $u(a)$ and $v(a)$ satisfy the properties given in Assumption~\ref{a1}. Below we show this is the case for four common applications, where the expressions of posterior  means and variances will also make it transparent that  Assumptions~\ref{a3}(i) is a consequence of the strong law of large numbers.  We therefore need only to verify Assumption~\ref{a3}(ii). \  Note Assumption~\ref{a2} is a restriction on the hyper-parameters in our asymptotic regime, and hence it is satisfied whenever we treat the value $\Delta=m(\mu_m-\mu_T)^2/\sigma^2_T$ as fixed when we let $m$ vary.

\subsection*{Exponential}
Assume that $X_1,\dots, X_n$ are exponential random variables, and hence $\mu_X=\lambda^{-1}$ and $\sigma^2_X=\lambda^{-2}$.  The conjugate prior on $\lambda$ is the gamma distribution with parameters $\alpha$ and $\beta$, $\Gamma(\alpha,\beta)$.
The baseline is given by taking $(\alpha,\beta) \to 0$, yielding $\pi_b(\lambda) \sim \lambda^{-1}$, the Jeffreys prior.  The corresponding posteriors  are respectively  gamma distributions with
\begin{align}
& \E_\pi[\lambda |\vec X_n] = \frac{\alpha + n}{\beta+ n \bar{X}_n},
&& \Var_\pi[\lambda |\vec{X}_n] =  \frac{\alpha + n}{(\beta+ n \bar{X}_n)^2}; \label{e:meane} \\
& \E_{\pi_b}[\lambda |\vec{X}_n] = \frac{1}{ \bar{X}_n},
&& \Var_{\pi_b}[\lambda |\vec{X}_n] =  \frac{1}{ n \bar{X}_n^2}. \label{e:vare}
\end{align}
It is easy to see from the first expression of (\ref{e:meane}) that for $\theta=\lambda$, we should take  $a = \beta\alpha^{-1},$  $ d = \alpha$, and $T=X$.\ Condition  (\ref{eq:very}) then is satisfied by $u(a) = a^{-1}$ and $v(a) = a^{-2}$ exactly without the $O$ terms because $\Gamma(\alpha, \beta)$ has mean and variance $\alpha\beta^{-1}$ and $\alpha\beta^{-2}$, respectively.   Assumption~\ref{a3}(ii) follows trivially from (\ref{e:vare}) because it shows that $\Var_{\pi_b}[\lambda |\vec{X}_{\hat I}]>0$ for any finite $\hat I. $
Condition  (\ref{e:supr}) can also be verified   directly from  $v(\mu_X) =\mu_X^{-2}=\lambda^2$, and  $[u'(\mu_X)]^2\sigma_X^2 = [-\mu_X^{-2}]^2\sigma^2_X =\lambda^{2}.$

Using the alternative parameterization $\mu_X$, conjugate family then becomes the inverse gamma, also with parameters $\alpha$ and $\beta$.  The posterior mean and variance functions then become
\begin{align}
& \E_\pi[\mu_X |\vec X_n] = \frac{ \beta+ n \bar{X}_n  }{\alpha + n - 1},
&& \Var_\pi[\mu_X |\vec{X}_n] =  \frac{( \beta+ n \bar{X}_n)^2}{(\alpha + n - 1)^2(\alpha + n - 2)}; \label{e:meane2} \\
& \E_{\pi_b}[\mu_X |\vec{X}_n] = \frac{   \bar{nX}_n  }{ n - 1},
&& \Var_{\pi_b}[\mu_X |\vec{X}_n] =  \frac{(  n \bar{X}_n)^2}{( n - 1)^2( n - 2)}. \label{e:vare2}
\end{align}
To be consistent with the baseline choice for $\lambda$, we have retained the choice of $\alpha=0$ and $\beta=0$ for the baseline prior; otherwise one needs to explain why the value of hyper-parameter should depend on the transformation of the parameter. A consequence of this consistency is that the posterior mean does not exist for $n=1$, and posterior variance is infinite
when $n\le 2$, as seen in (\ref{e:vare2}).  But this does not cause trouble because Assumption~\ref{a1} permits a finite number of exceptions as captured
by $k\ge k^*$.  Taking the same $a$ and $d$ as above, we have that the mean function is given by $ad(d-1)^{-1}= a +O(d^{-1})$, and
the variance function is given by $a^2{d^2}[(d-1)^2 (d-2)]^{-1} = a^2d^{-1} + O(d^{-2}).$
Condition  (\ref{eq:very}) is therefore satisfied.  Assumption~\ref{a3}(ii) still follows by the same reasoning, while Condition  (\ref{e:supr}) can be verified from $v(\mu_X) = \mu_X^2$, and $[u'(\mu_X)]^2\sigma_X^2 = \sigma^2_X =\mu_X^{2}.$

As a side note, a reviewer pointed out that the posterior variance in (\ref{e:meane})
is not necessarily decreasing in $n$. This is a known phenomenon because posterior variance depends on actual observations. There is no guarantee that with one more particular observation, which could be a rather ``unlucky" one, we would reduce our uncertainty.  It is for such reasons that we emphasize in the main text the need to have replications to ensure monotonicity, and indeed to meaningfully define ``information".    

 \subsection*{Bernoulli}
Assume that $X_1, \dots, X_n$ are Bernoulli random variables, and hence $\mu_X=p$ and $\sigma^2_X=p(1-p)$.  The conjugate prior on $p$ is the beta distribution with parameters $\alpha$ and $\beta$, $B(\alpha, \beta)$.
By taking $\alpha$ and $\beta$ to zero,  our baseline is $\pi(p)\propto p^{-1}(1-p)^{-1}$. The  posteriors  are beta distributions with means and variances
\begin{align}
& \E_\pi[p | \vec X_n] = \frac{\alpha + n \bar{X}_n}{\beta+ \alpha + n},
&& \Var_\pi[p | \vec X_n] =  \frac{(\alpha + n \bar{X}_n)(\beta + n - n\bar{X}_n)}{(\alpha + \beta +n)^2 (\alpha + \beta +n+1)}; \label{e:meanp} \\
& \E_{\pi_b}[p | \vec X_n] = \bar{X}_n,
&& \Var_{\pi_b}[p | \vec X_n] =  \frac{\bar{X}_n(1- \bar{X}_n)}{ n+1}.\label{e:varp}
\end{align}
As before, (\ref{e:meanp}) implies that we can take $a = \alpha(\alpha + \beta)^{-1}$, $d = \alpha + \beta$, and $T=X$. We then see that (\ref{eq:very}) holds for $u(a) = a$ and $v(a) = a(1-a)$ because $B(\alpha, \beta)$ has mean $a$ and variance $a(1-a)(d+1)^{-1}=v(a)d^{-1}-v(a)[d(d+1)]^{-1}=v(a)d^{-1}+O(d^{-2}). $ Again Assumption~\ref{a3}(ii) follows from the second expression in (\ref{e:varp}), excluding the trivial case where all $X$s are equal.   Condition  (\ref{e:supr}) is  verified because  $[u^{\prime}(\mu_X)]^2\sigma^2_X=p(1-p)\equiv v(\mu_X).$

{As reminded by a reviewer, the baseline prior we used here is not the Jeffreys prior, which is $\pi(p)\propto p^{-1/2}(1-p)^{-1/2}$. If we adopt this prior, the above calculation remains the same, but with an
artificial observation $X_0=0.5$ and changing $n$ to $n+1$. This addition will also make the Assumption~\ref{a3}(ii) always hold because the second expression in \eqref{e:varp} can no longer be zero even if all (observed) $\{X_1, \ldots, X_n\}$ are the same.  Indeed adding an artificial observation of value 0.5 has been a common strategy in dealing with extreme observations, and in rendering much better frequentist properties of the resulting (Bayeian) estimators; see for example \citet{brown2001interval} and the references therein.} 

\subsection*{Poisson}
Assume that $X_1, \dots X_n$ are Poisson random variables, and hence $\mu_{X}=\sigma^2_X=\lambda$.  The conjugate prior on $\lambda$ is the gamma distribution $\Gamma(\alpha, \beta)$.  The prior and the baseline are therefore the same as for the exponential, but the posterior means and variances become
\begin{align}
& \E_\pi[\vec X_n] = \frac{\alpha + n \bar{X}_n}{\beta+ n },
&& \Var_\pi[\lambda | \vec X_n] =  \frac{\alpha + n \bar{X}_n}{(\beta+ n )^2}; \label{e:meano}\\
& \E_{\pi_b}[\lambda | \vec X_n] =  \bar{X}_n,
&& \Var_{\pi_b}[\lambda | \vec X_n] =  \frac{\bar{X}_n}{ n}. \label{e:varo}
\end{align}
The first expression of (\ref{e:meano}) tells us to take $a = \alpha\beta^{-1}$, $d= \beta $ and $T=X$. Condition~(\ref{eq:very}) then holds with $u(a) = a$ and $v(a) = a$ because $\Gamma(\alpha,\beta)$ has mean    $\alpha\beta^{-1}$ and $\alpha\beta^{-2}$. Assumption~\ref{a3}(ii) follows from the second expression in (\ref{e:varo}), excluding the pathological  case where all $X$s are  zero. Condition  (\ref{e:supr}) is verified because $[u^{\prime}(\mu_X)]^2\sigma^2_X=\lambda=v(\mu_x).$
\subsection*{Geometric}
Assume that $X_1, \dots, X_n$ are geometric random variables, and hence $\mu_X=p^{-1}$ and $\sigma^2_x=p^{-2}(1-p)$.   The conjugate prior for $p$ is the beta distribution as given in the Bernoulli example, but with the posterior means and variances given by
\begin{align}
& \E_\pi[p | \vec  X_n] = \frac{\alpha + n }{ \alpha+ \beta + n \bar{X}_n },
&& \Var_\pi[p | \vec X_n] =  \frac{(\alpha + n )(\beta +  n\bar{X}_n - n)}{(\alpha + \beta + n \bar{X}_n)^2 (\alpha + \beta+ n \bar{X}_n+1)}; \label{eq:meang}\\
& \E_{\pi_b}[p |\vec X_n] = \frac{1}{\bar{X}_n},
&& \Var_{\pi_b}[p | \vec X_n] =  \frac{\bar{X}_n - 1}{ \bar{X}_n^2 ( n \bar{X}_n +1)}. \label{eq:varg}
\end{align}
The first expression of (\ref{eq:meang}) then tells us to take  $a = (\alpha+\beta)\alpha^{-1}$ , $d = \alpha$, and $T=X$. Condition~(\ref{eq:very}) then holds with $u(a) = a^{-1}$ and $v(a) = (a-1)a^{-3}$ because $B(\alpha, \beta)$ has mean $a^{-1}$ and variance $a^{-1}(1-a^{-1})(\alpha+\beta+1)^{-1} = (a-1)a^{-3}(d+a^{-1})^{-1}= v(a)d^{-1}-v(a)[ad(ad+1)]^{-1}=v(a)d^{-1}+O(d^{-2})$. The second expression of (\ref{eq:varg}) shows that Assumption~\ref{a3}(ii) is trivially satisfied other than the pathological case where all $X$s are one.
Furthermore, because $[u^{\prime}(\mu_X)]^2\sigma^2_X= [-p^{2}]^{2}p^{-2}(1-p)=p^{2} (1-p)=v(\mu_X)$, condition (\ref{e:supr}) also holds.

\section{Additional Simulations}\label{a:s:D}
In this section we expand upon the initial set of simulations for the normal and exponential settings.  We consider the additional sample sizes of $n=100$ and $500$.  In the normal setting (Figure \ref{f:normkn:extra}) there is also a boundary effect since our estimates of $\Delta^2$ will always be positive (with probability 1) even when the truth is $\Delta = 0$.  To highlight this point a bit more, we also consider the case {where the sample standard deviation is used instead of the known standard deviation} 
(Figure \ref{f:normkn:estsd}) for both methods, as an ad hoc means of countering this boundary effect.  
This means using the sample standard deviation (with the entire sample) in place of $\sigma_0$ when calculating the quick formula \eqref{eq:pd} or in the posterior mean/variance calculations for RPlot.  
As a reminder, the priors in each setting are selected to achieve a particular $r$ (in all cases $r=0.05$) and $\Delta$ (we consider $\Delta = 0,0.5,1,1.25)$, meaning that hyper parameters of the priors change with the sample size.

Starting with Figure \ref{f:normkn:extra} we have the known variance normal setting with $n=100$ (first three rows) and $n=500$ (second three rows).  We can see that the message is basically the same as with $n=1000$.  The more intensive approach (RPlot) based on \eqref{eq:mave} agrees with the faster formula (QuickEst) based approach from \eqref{e:key}.  In this case \eqref{e:key} is exact (though all parameters are still estimated) so it works well as expected.  However, a careful inspection of the first column shows that the methods are slightly biased down when $\Delta = 0$.  At first this could seem odd given that \eqref{eq:key} is exact in this context, however our estimates of $\Delta$ will never be zero, thus there is a meaningful boundary effect biasing the method.  {Moving to Figure \ref{f:normkn:estsd} and focusing on the first column where the boundary effect occurs, we see that this issue is assuaged by using the estimated standard deviation}, 
which provides enough flexibility to absorb the issues at the boundary.  Interestingly, the RPlot and QuickEst methods are also in much stronger agreement (though the variability has increased) 
with the estimated standard deviation as well.

Turning to the exponential setting with a rate parametrization (Figure \ref{f:exp_rate:extra}) and mean parametrization (Figure \ref{f:exp_mean:extra}) we can can explore the differences between our asymptotic approximation and the truth.  In particular, while our plot based approach still works quite well, the QuickEst approach does noticeably worse.  However, this difference gets smaller for larger $n$ especially once compared to Figures \ref{f:exp_rate} and \ref{f:exp_mean}.  The difference between the two parameterizations is not extreme, though the rate based parametrization seems to behave slightly better.

\begin{figure}
    \centering
    \includegraphics[width=1\textwidth]{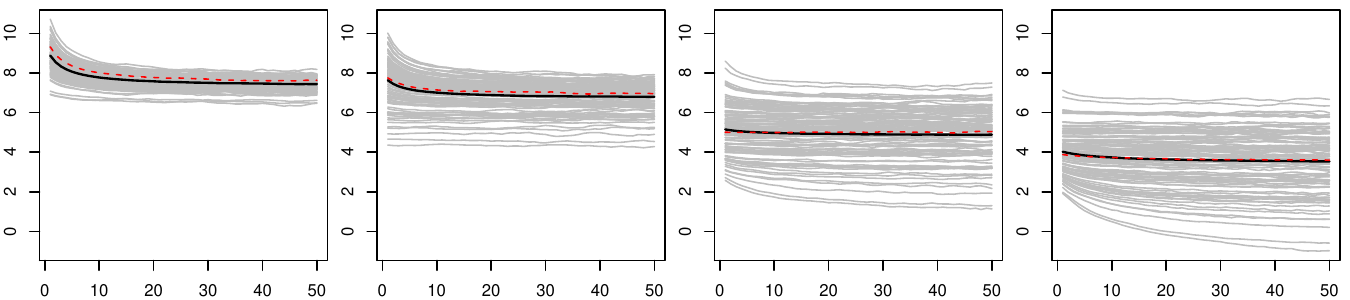}
    \includegraphics[width=1\textwidth]{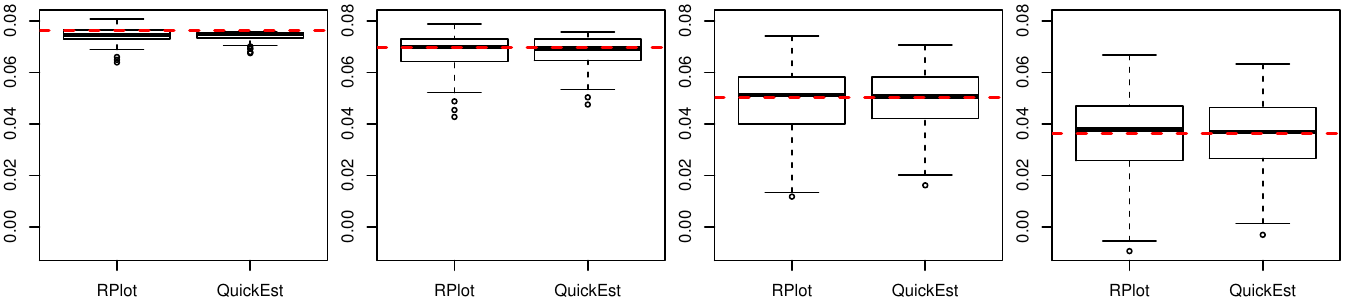}
    \includegraphics[width=1\textwidth]{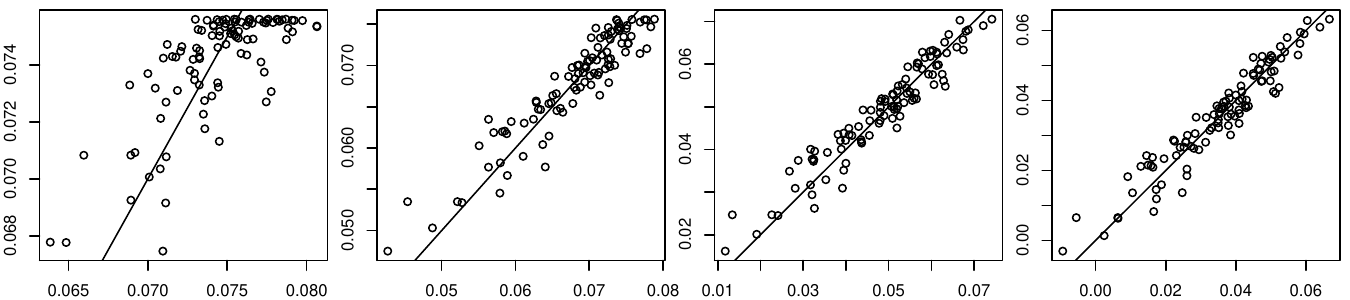}
    \includegraphics[width=1\textwidth]{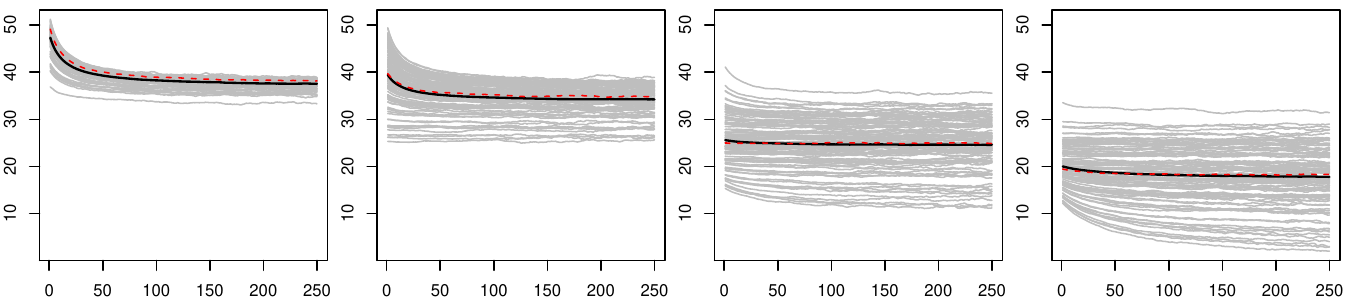}
    \includegraphics[width=1\textwidth]{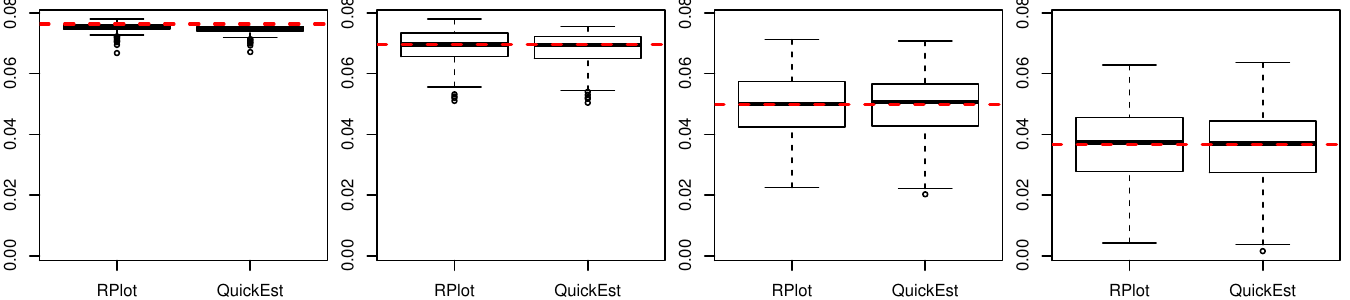}
    \includegraphics[width=1\textwidth]{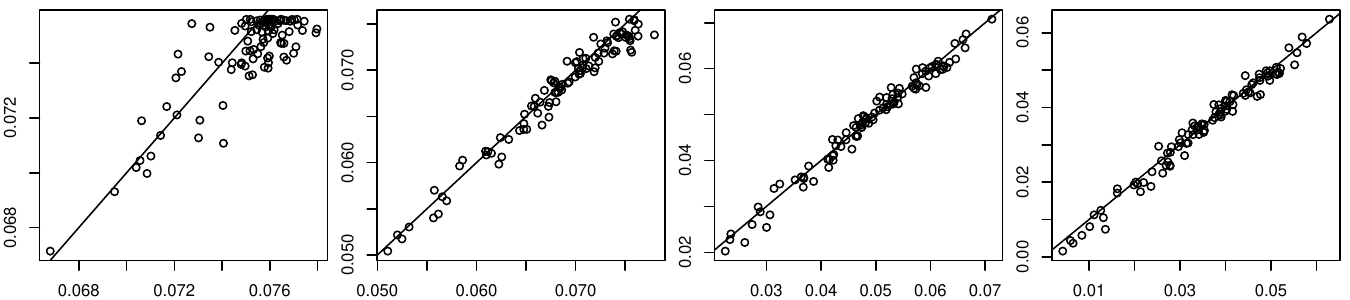}
    \caption{Normal Distribution with $\theta=\mu$, known variance, $n=100$ (first three rows) and $n=500$ (last three rows).  The same caption for Figure~\ref{f:normkn} applies.   }
    \label{f:normkn:extra}
\end{figure}

\begin{figure}
    \centering
    \includegraphics[width=1\textwidth]{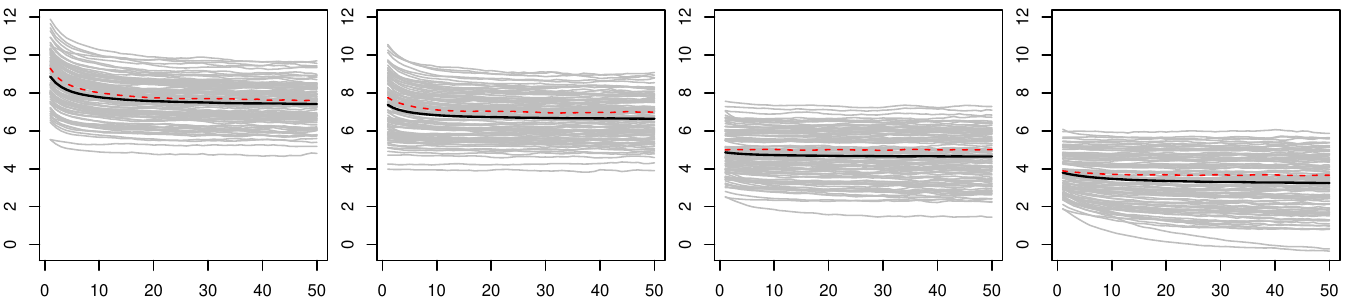}
    \includegraphics[width=1\textwidth]{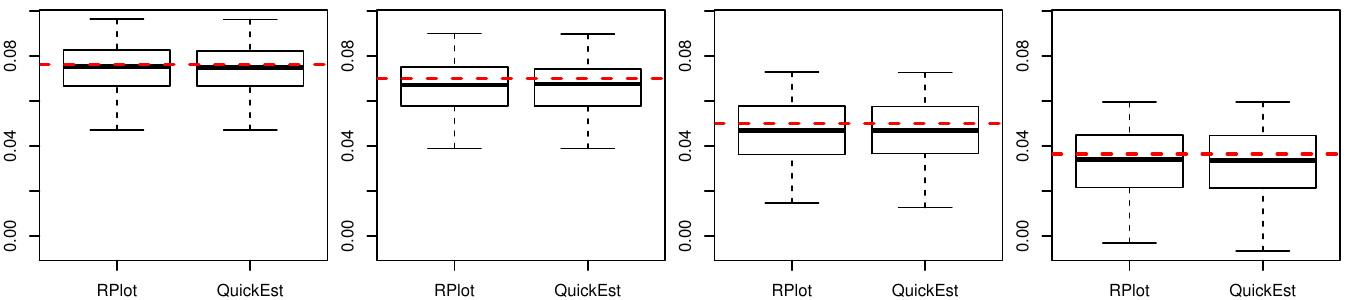}
    \includegraphics[width=1\textwidth]{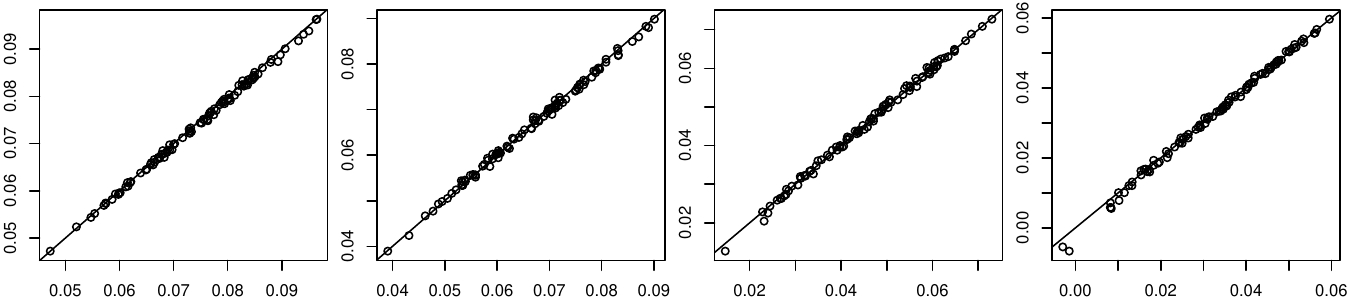}
    \includegraphics[width=1\textwidth]{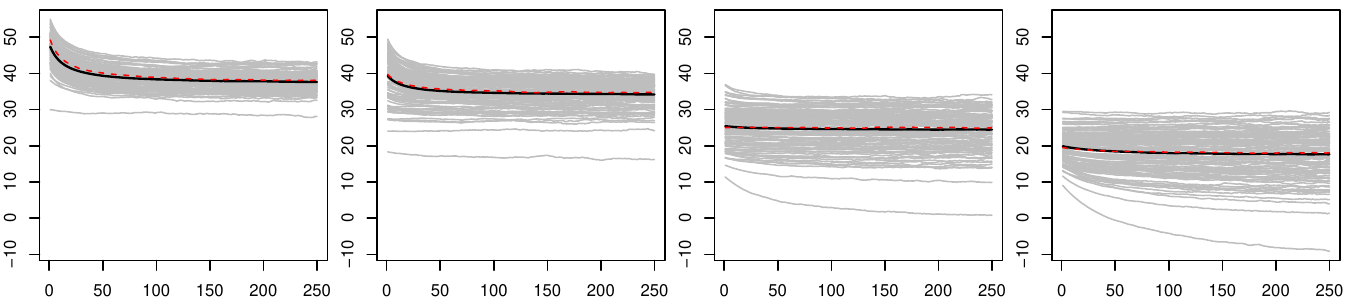}
    \includegraphics[width=1\textwidth]{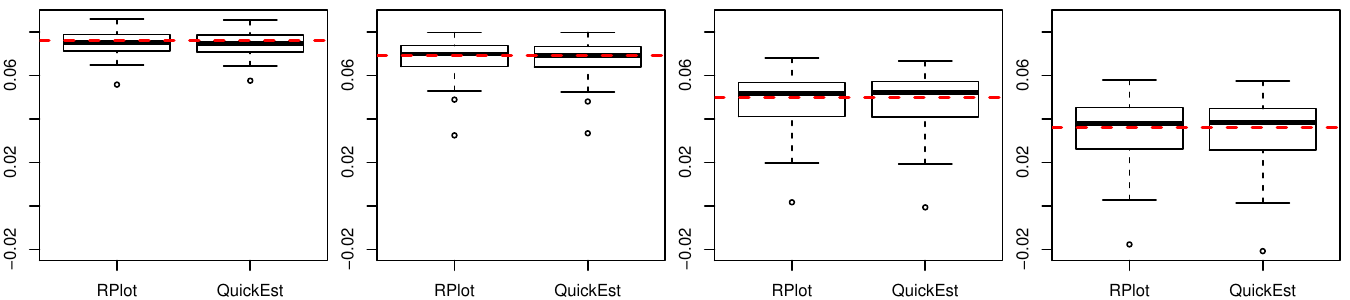}
    \includegraphics[width=1\textwidth]{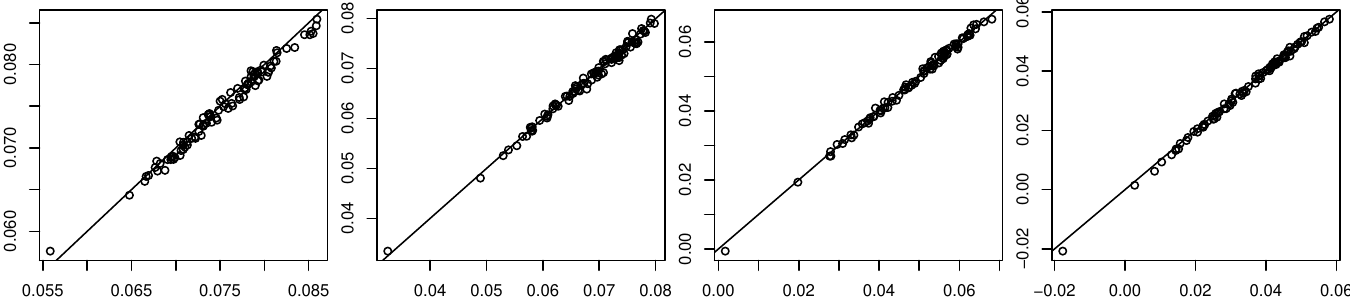}
    \caption{This setting is identical to Figure \ref{f:normkn:extra}, but the quick formula uses the sample standard deviation instead of the true standard deviation.}
    \label{f:normkn:estsd}
\end{figure}

\begin{figure}
    \centering
    \includegraphics[width=1\textwidth]{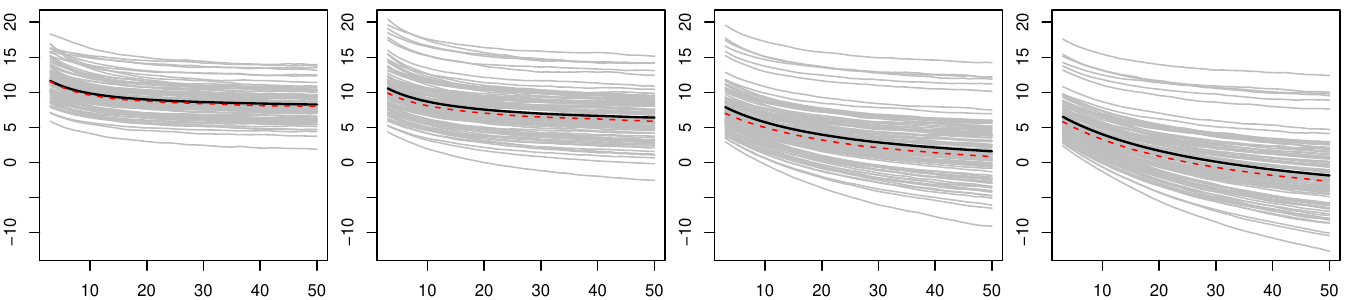}
    \includegraphics[width=1\textwidth]{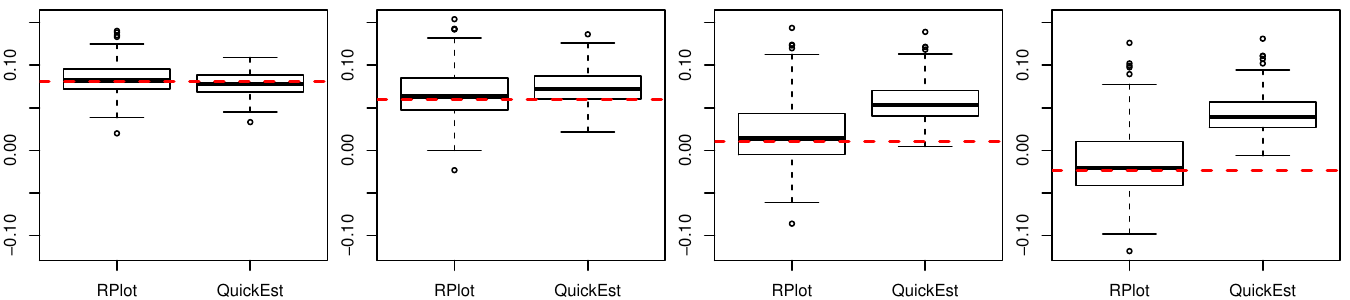}
        \includegraphics[width=1\textwidth]{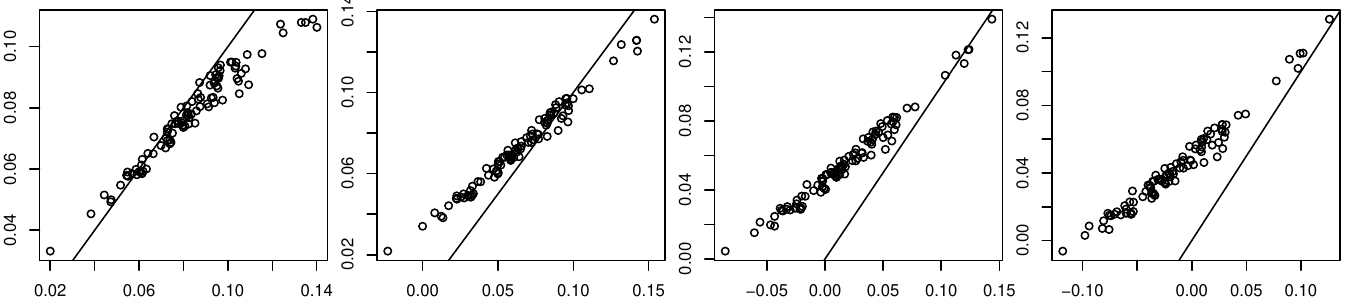}
        \includegraphics[width=1\textwidth]{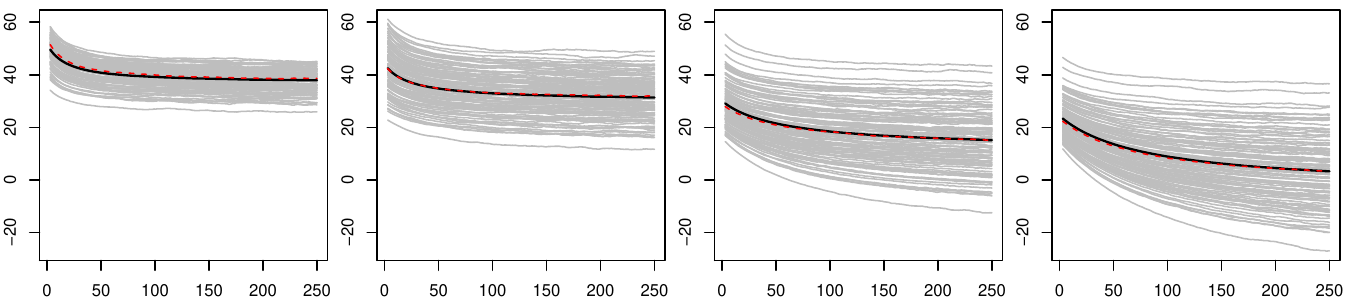}
    \includegraphics[width=1\textwidth]{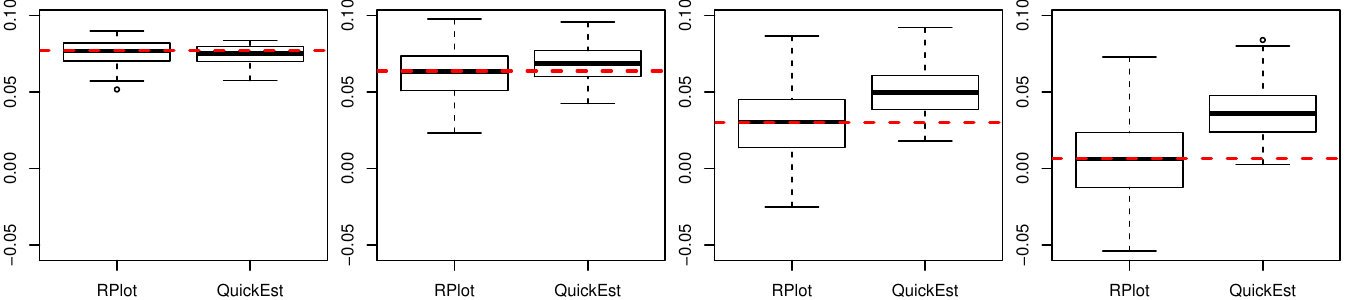}
        \includegraphics[width=1\textwidth]{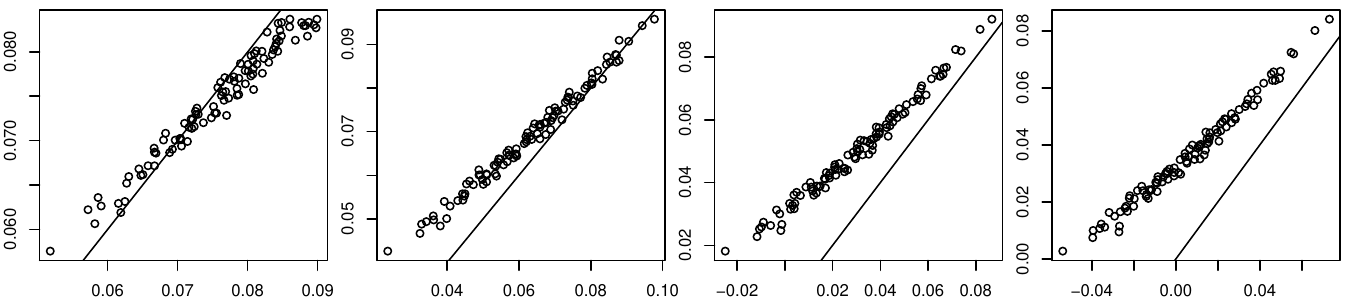}
    \caption{Exponential Distribution with $\theta=1/\mu$, $n=100$ (first three rows) and $n=500$ (last three rows).  The same caption for Figure~\ref{f:normkn} applies. }
    \label{f:exp_rate:extra}
\end{figure}

\begin{figure}
    \centering
    \includegraphics[width=1\textwidth]{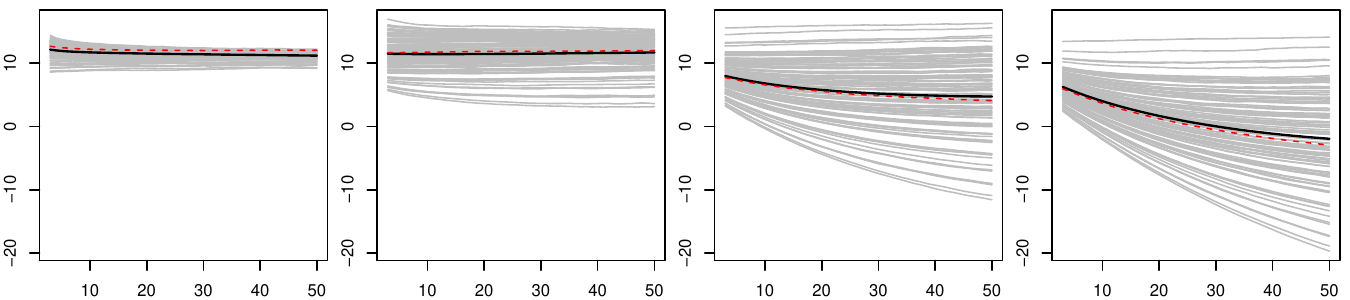}
    \includegraphics[width=1\textwidth]{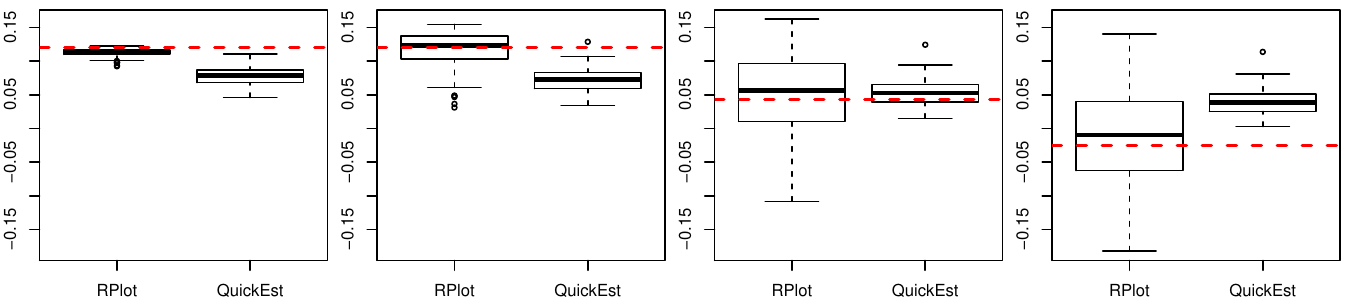}
        \includegraphics[width=1\textwidth]{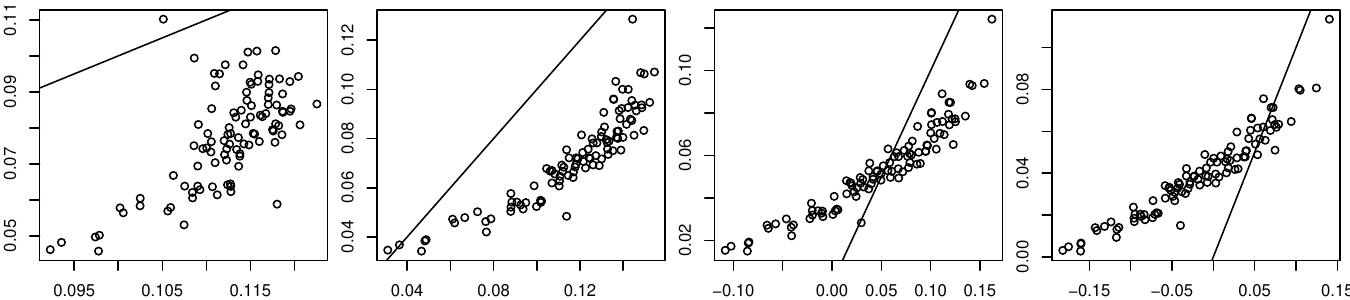}
        \includegraphics[width=1\textwidth]{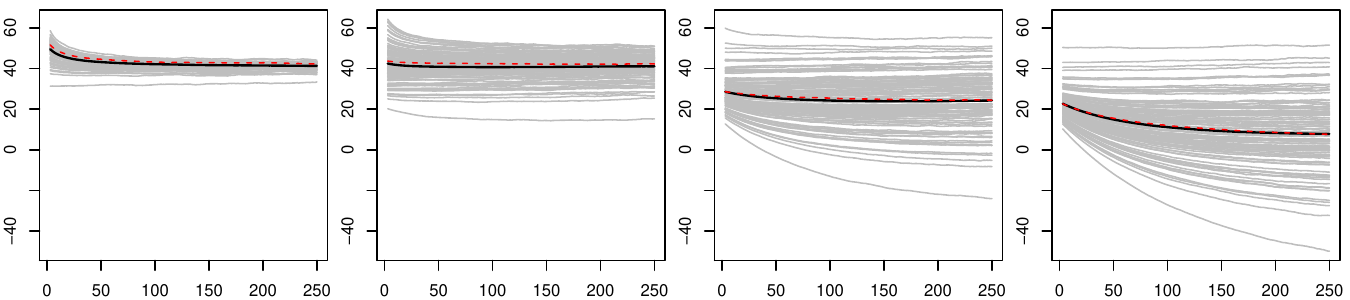}
    \includegraphics[width=1\textwidth]{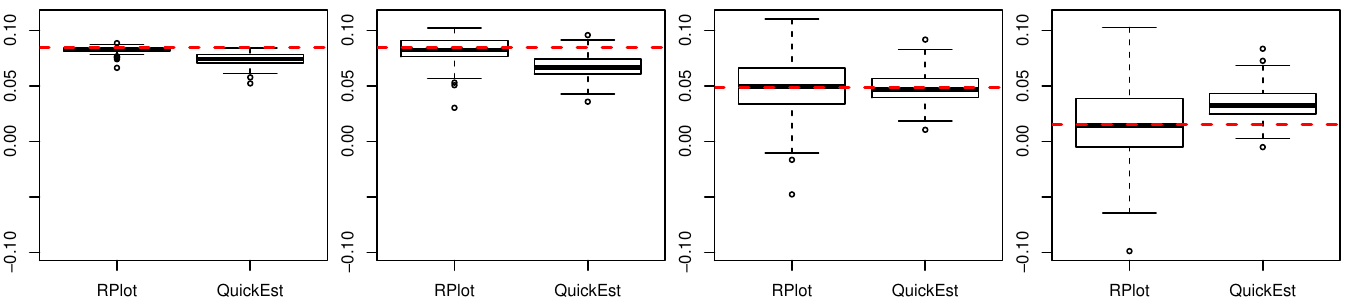}
        \includegraphics[width=1\textwidth]{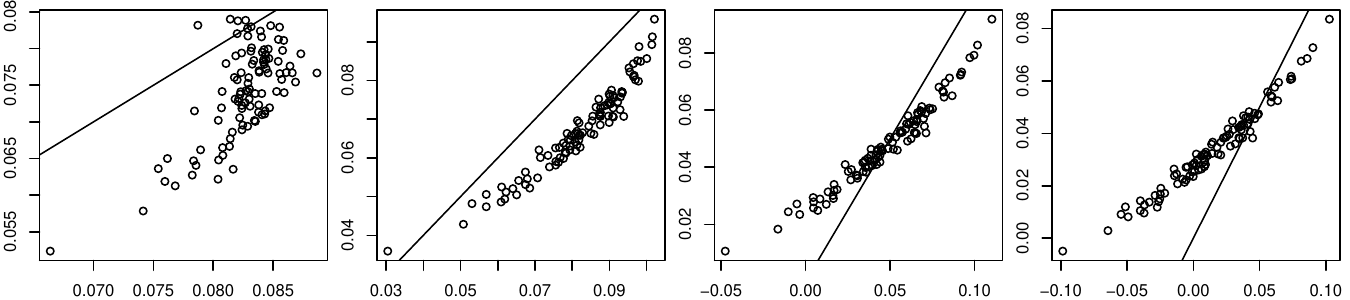}
    \caption{Exponential Distribution with $\theta=\mu$, $n=100$ (first three rows) and $n=500$ (last three rows).  The same caption for Figure~\ref{f:normkn} applies. }
    \label{f:exp_mean:extra}
\end{figure}

\clearpage

\section{Data Application} \label{a:app}
In this section we explore the sensitivity of the results in Section \ref{s:app} to different baselines for the application in Section \ref{s:app}.  In short, we will see that the results summarized in Figure \ref{f:app} are quite robust.  We consider both normal and Cauchy distributions for the baseline on the slope parameter; recall we originally used Cauchy distributions with a scale of 1000 for the baseline and scales of 2.5, 5, and 10 for the prior.  In the first row of Figure \ref{a:f:app} we plot $\hat M_k$ using a Cauchy baseline with scales of 100, 1000, and 10000, respectively, from left to right.  In the second row we use the same setup, but the baseline is changed from a Cauchy to a normal (the priors of interest  are still Cauchy).  In each case, regardless of the choice, the general pattern is the same as the one seen in Figure \ref{s:app}.    There is a small amount of conflict for the smaller scales, but the conflict dissipates quickly for larger sample sizes, resulting in priors that add around 2-6 observations worth of information.  This conclusion is the same regardless of the baseline used and agrees with the results in Section \ref{s:app}.

Lastly, as a test to the sensitivity to the choice of $D$, we swap from using the $L_2$ to $L_1$ distance, i.e. using mean absolute deviation instead of mean squared error.  We do this only for one setting (normal baseline with scale 100) and summarize the results in Figure \ref{a:f:appL1}.  As we can see, compared to Figure \ref{a:f:app}, the results are nearly the same, indicating that the results of Section \ref{s:app} are fairly robust to the choice of $D$ as well.

\begin{figure}[ht]
\begin{center}
\includegraphics[width=.3\textwidth]{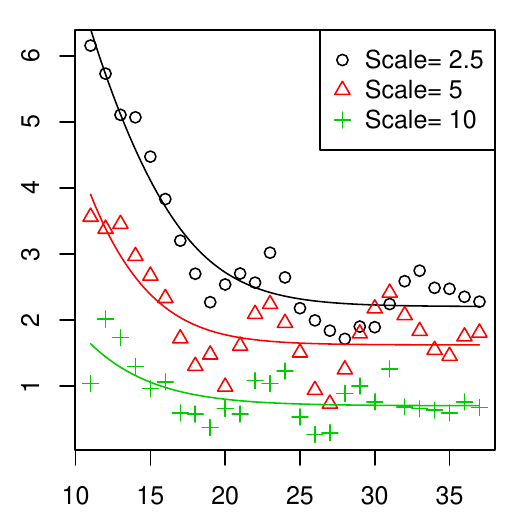}
\includegraphics[width=.3\textwidth]{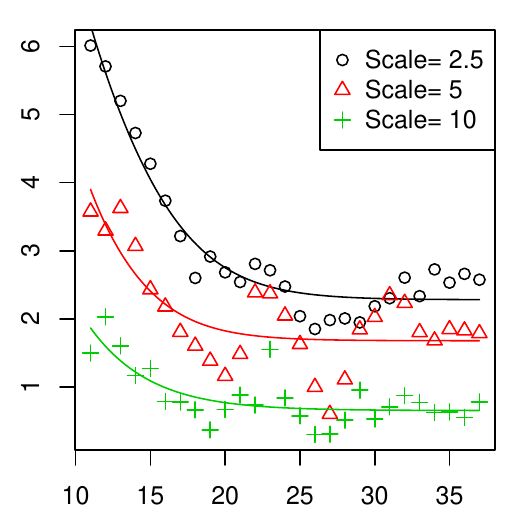}
\includegraphics[width=.3\textwidth]{sup_graphs/App_M_L2_dfb_1_dfp_1_Scale_10000.pdf}
\includegraphics[width=.3\textwidth]{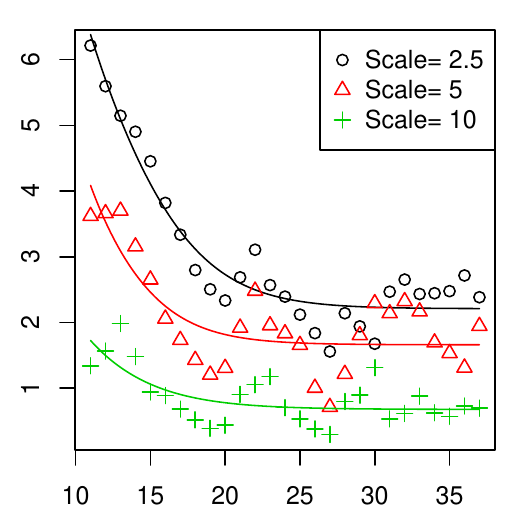}
\includegraphics[width=.3\textwidth]{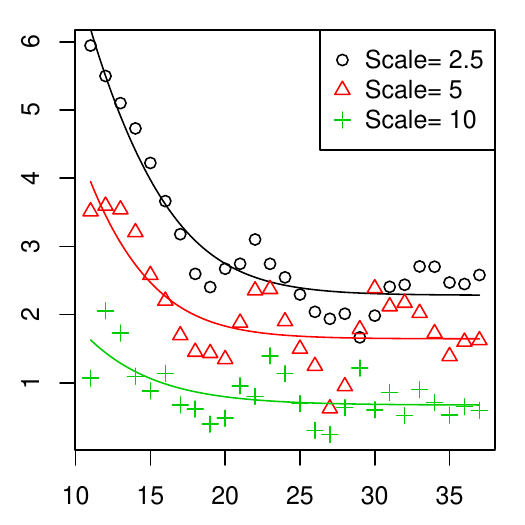}
\includegraphics[width=.3\textwidth]{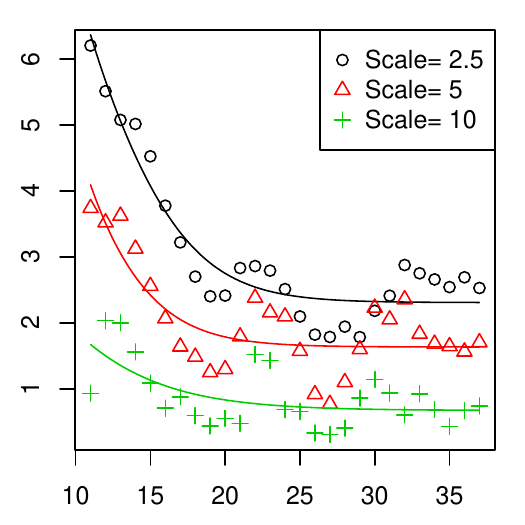}
\vspace{-.2in}
\end{center}
\caption{Recreations of Figure \ref{f:app} using different baseline priors.  The top row uses Cauchy priors with scales of 100, 1000, and 10000, respectively, from left to right.  The bottom row is the same, but the Cauchy family is exchanged for the normal. \label{a:f:app}}
\end{figure}

\begin{figure}[ht]
\begin{center}
\includegraphics[width=.3\textwidth]{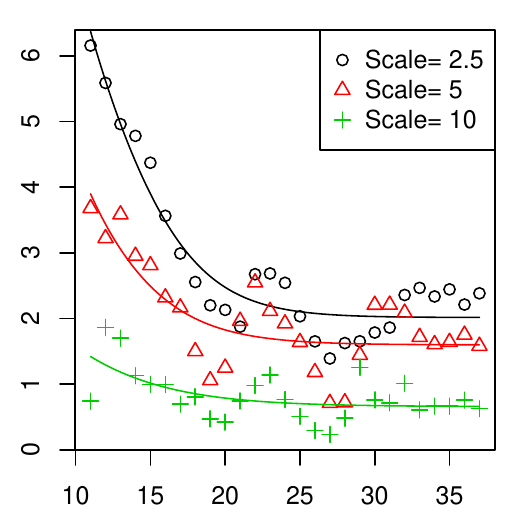}
\vspace{-.2in}
\end{center}
\caption{Recreation of Figure \ref{f:app}, but using MAD instead of MSE for $D$.  The baseline was taken to be normal with scale 100.  \label{a:f:appL1}}
\end{figure}

\end{document}